\newtheorem{theorem}{Theorem}
\newtheorem{lemma}{Lemma}
\newtheorem{remark}{Remark}
\newcounter{mytempeqncnt}
\begin{document}
\title{A PHY Layer Security of a Jamming-Based Underlay Cognitive Hybrid Satellite-Terrestrial Network}
\author{Mounia~Bouabdellah, and Faissal~El~Bouanani,~\IEEEmembership{Senior~Member,~IEEE}
\thanks{M. Bouabdellah and F. El Bouanani are with ENSIAS College of Engineering, Mohammed V University, Rabat, Morocco (e-mails: \{mounia\_bouabdellah, {faissal.elbouanani}\}@um5.ac.ma).}
\thanks{Manuscript received XXX, XX, 2015; revised XXX, XX, 2015.}}

\markboth{IEEE Transactions on Vehicular Technology,~Vol.~XX, No.~XX, XXX~2015}
{}
\maketitle
\begin{abstract}
In this work, we investigate the physical layer security of a jamming-based
underlay cognitive hybrid satellite-terrestrial network {consisting of a}
radio frequency link at the first hop and an optical feeder at the second
hop. Particularly, one secondary user (SU) is transmitting data {to an end-user} optical ground station {($\boldsymbol{D}$)} through the aid of a relay satellite, in the presence of {an active} eavesdropper 
{at} each hop. {Moreover}, another SU located in the first hop is acting as a friendly jammer and continuously broadcasting an artificial noise that cannot be decoded by the wiretapper so as to impinge positively on the system's secrecy. Owing to the underlying strategy, the SUs are permanently adjusting their transmit powers in order to avoid causing harmful interference to primary users. The RF channels undergo shadowed-Rician and Rayleigh fading models, while the optical link is subject to Gamma-Gamma {turbulence with pointing error}. Closed-form and asymptotic expressions for the intercept probability (IP) are derived considering two different scenarios regardless of the channel's conditions, namely (i) absence and (ii) presence of a friendly jammer. The effect of various key parameters on IP, e.g., {sources'} transmit power, {artificial noise}, maximum tolerated interference power, and fading severity parameters are examined. Precisely, we aim to answer the following question: could a friendly jammer further enhance the security of such a system even in a low SNR regime? All the derived results are corroborated by Monte Carlo simulations and new insights into the considered system's secrecy are gained.
\end{abstract}
\begin{IEEEkeywords}
	Cognitive radio networks, eavesdropping, free-space optics, intercept probability, jamming signals, physical layer security, satellite communication. 
\end{IEEEkeywords}
\IEEEpeerreviewmaketitle
\section{Introduction}

With the arrival of the fifth-generation (5G) wireless cellular network and
the increasing number of smart mobile devices along with the corresponding
Internet of Things (IoT) applications, the demand for high-speed
communications is increasing as well. {Moreover, providing connectivity to rural areas and achieving worldwide connectivity is expected to be accomplished in sixth-generation (6G) communication networks \cite{Dang}, \cite{Yaacoub}}. In this context, hybrid satellite-terrestrial systems have been widely advocated among appropriate schemes to enlarge the network's coverage {in 6G networks}. Such networks operate in 
{a} conventional Ku (12 GHz) and Ka (26.5-40 GHz) bands.
Their aim {consists of} ensur{ing} a higher data rate (e.g., tens to
hundreds of Megabits per second) per end-user, and provide high-speed
communications to users in particular areas characterized by a low quality of
service when traditional terrestrial networks are employed. With the
objective of fulfilling the aforementioned data rate requirements, hundreds
of Gbps of aggregate throughput link should be allocated for the
ground-satellite connection. Indeed, one of the omnipresent issues in
setting up such high-speed networks {lies in} bandwidth limitation {of a}
radio-frequency (RF) spectrum, particularly in the Ku and Ka{-}bands, which
restricts the coverage of such systems to a limited number of end-users \cite%
{Gharanjik2013}. In this regard, hybrid
satellite-terrestrial cognitive network (HSTCN), employed jointly with free-space optical (FSO) technology could provide
an efficient solution to alleviate the spectrum scarcity issue.

Remarkably, HSTCN has received considerable attention during the last
years. For instance, the project cognitive radio for satellite
communications (CoRaSat) \cite{corasat2012} has proposed the implementation
of dynamic and smart spectrum usage techniques for satellite communications
(SatCom) in {view} to exploit either the underused or unused frequency
resources that have been already assigned to other services. Furthermore,
high-speed communication {(e.g., Tbps per optical beam)} can be achieved by
employing FSO technology {\cite{Esmai2015}} {where} the data
is transmitted with the help of {an} optical source emitting light beams in
either visible (400-800 nm) or infrared (1500-1550 nm) spectrum bands.

Interestingly, FSO communication is {most immune} to the
interference and provide a high level of security
against wiretapping attacks due to its narrow beamwidth.
In contrast, the broadcast nature of RF wireless
communication makes it vulnerable to eavesdropping attacks. Moreover,
allowing the unlicensed users to share the spectrum with the licensed ones
in CRNs can increase the risk of {overhearing} the %
{signals of legitimate transmitters}. In this context,
physical layer security (PLS) has been proposed as an efficient solution to
support and supplement existing cryptography protocols that are used in
higher {network layers}. Its concept was first introduced in
the seminal work of Wyner \cite{Wyner}{,} therein, it has been shown that
better secrecy can be achieved if the legitimate {channel's}
capacity exceeds 
{that of the eavesdropper one by a certain
	rate threshold}. Towards this end, the main concern of PLS is to strengthen
the secrecy capacity and therefore {to improve the SS}.
Actually, various techniques have been suggested to enhance the {SS},
namely, {artificial noises}, relay selection protocols,
energy harvesting, cooperative communications, spatial diversity %
{and} combining techniques, 
{massive
	multiple-input and multiple-output}, non-orthogonal multiple access,
zero-forcing precoding techniques, etc.

{Recently, several works have been conducted into the secrecy
	analysis of CRNs with various system strategies} \cite{lei2016secrecy}-\cite%
{AccessMounia}. For instance, non-cooperative CRNs 
{in
	the presence of a direct link} between the source and the destination %
{have} been considered in \cite{lei2016secrecy}-\cite%
{nguyen2017secure}. Specifically, it has been assumed in \cite%
{lei2016secrecy} that {both} sources and receivers are
equipped with multiple antennas, whereas in \cite{tran2017cognitive}-\cite%
{nguyen2017secure} multiple antennas are considered %
{exclusively} at the receivers. The PLS of dual-hop
relay-assisted CRNs was investigated in \cite{sakran2012proposed}-\cite%
{lei2017secrecy}. Explicitly, multiple and single eavesdropper have been
considered {for various relay selection protocols} in \cite%
{sakran2012proposed},\cite{ding2016secrecy} and \cite{ho2017analysis},\cite%
{lei2017secrecy}, respectively. {Moreover}, {the secrecy
	outage probability (SOP) was derived by {examining two}
	scenarios, namely} the {selection of the relay} maximizing
the achievable secrecy rate was {presented} in \cite%
{sakran2012proposed}, while {the wiretap link signal-to-noise (SNR)
	minimization-based relay node selection protocol was} %
{analyzed} in \cite{ding2016secrecy},\cite{ho2017analysis}.
Significantly, the impact of jamming signals on the {SS} under various
jammers' selection protocols has been discussed in \cite{zou2016physical}-%
\cite{MouniaBouabdellah}. {By} leveraging the powerfulness of
EH in enhancing the reliability of wireless communication systems, the PLS
of {EH-based CRNs} was inspected by assuming %
{either} time-switching {or} power-splitting
protocols in \cite{TVT2} {and} \cite{tgcn}-\cite{AccessMounia},
respectively. {Furthermore}, the energy was considered to be
harvested from the PU's and SU's signals in \cite{TVT2}-\cite{tgcn} and \cite%
{ISWCS}-\cite{AccessMounia}, respectively.

\subsection{Motivation and contributions}

To the best of the authors' knowledge, very few research works have
investigated the PLS of HSTCNs. 
{The secrecy analysis of
	HSTCN where the terrestrial secondary network is sharing the spectrum with a
	satellite system, acting as a primary network is examined in
	\cite{An2016JSAC} by maximizing the SUs rate and considering various
	beamforming techniques.} 
{Distinctively, the authors in
	\cite{LU2018access} dealt with the minimization of the total transmit power
	of numerous terrestrial base stations and on board the satellite subject to
	the PU secrecy rate constraint.} Later, the average secrecy capacity (ASC)
and SOP of a downlink hybrid satellite-FSO cooperative system were derived
in \cite{Ai2019} by considering both amplify-and-forward and
decode-and-forward (DF) relaying protocols. Likewise, the authors of \cite%
{Illi2020} investigated the PLS of a hybrid very high throughput satellite
communication system with an FSO feeder link. 
{Therein, a
	satellite combines the received data from multiple optical ground stations,
	performs a decoding process, regenerates the information signal, and
	forwards it to the end-user with zero-forcing precoding so that to cancel
	the interbeam interference at the receivers.}

Motivated by the above, we {aim at this work to investigate}
the PLS of HSTCNs consisting of one secondary user (SU) source communicating
with a secondary base station through the aid of relay satellites in
presence of two eavesdroppers that are overhearing both communication hops.
Besides, another SU acts as a friendly jammer at the first %
{hop and} broadcasts an artificial noise {to}
disrupt the eavesdropper. Without loss of generality, 
{the RF
	secondary and primary links are undergoing} shadowed-Rician and Rayleigh
fading models, respectively, while the FSO links employed in the second hop
are {subject to} Gamma-Gamma (GG) fading with pointing error
for a DF relaying system, and in the presence of two eavesdroppers at both
communication hops, deriving the IP is not straight forward. Therefore, the
main aim of this work is to derive a new formula for the IP that considers
the presence of two eavesdroppers for a DF relaying system.

Pointedly, the main contributions of this work can be summarized as follows:

\begin{itemize}
	\item A novel expression for the IP of a dual-hop 
	DF relaying in the presence of eavesdroppers {at each hop} is
	derived.
	
	\item Capitalizing on the above result, the IP expression of
		HSTCN is derived in closed-form for both the presence and absence of friendly
		jammer scenarios.
	
	\item 
	{The asymptotic expression for the IP in high SNR regime
		is also provided, based on it, the achievable diversity order is retrieved.}
	
	\item Insightful discussions on the impact of different key parameters of %
	{HSTCN on its security} are also provided. Specifically, we
	demonstrated that the {SS} can be enhanced by increasing (i) SU's transmit
	power, (ii) maximal tolerated interference power {(MTIP)} at
	PU, {and} (iii) average power of the downlink channel along
	with the satellite's transmit power. Moreover, we %
	{demonstrated} that under low {source}
	transmit power and low {MTIP constraints}, the friendly
	jammer does not enhance the SS.
\end{itemize}

\subsection{Organization of the paper}

The remainder of this paper is organized as follows. The considered HSTCNs
is presented in Section II, while in Section III, 
{a novel
	framework for the IP computation of a dual-hop DF-based CRN is provided, and
	the closed-form expressions for the IP of HSTCNs along with the asymptotic
	analysis are presented}. The numerical and simulation results are presented
and discussed in Section IV. Finally, the last section reports closing
remarks that summarize the current contribution.

\subsection{Notations}

For the sake of clarity, the different notations used throughout the paper
are defined and summarized in Table \ref{nota}.
\begin{figure*}[tbp]
	\begin{table}[H]
		\caption{List of functions and symbols.} 
		\begin{centering}
			\begin{tabular}{c|c|c|c} 
				\hline
				\hline
				\textit{Symbol} & \textit{Meaning} & \textit{Symbol} & \textit{Meaning} \tabularnewline
				\hline
				{$h_{.}$} & {Fading amplitude} & {$\mathbb{E}\left[ .\right] $} & {Expectation operator} \tabularnewline
				\hline
				$g_{.}=\left|h_{.}\right|^2 $ & Channel gain & {$ F_{.}\left( .\right) $} &{Cumulative distribution function (CDF)}\tabularnewline
				\hline
				$P^{\max}_{\mathbf{Q}}$ &  Maximum transmit power at $ \mathbf{Q}\in\{\mathbf{S},\mathbf{S_{J}}\} $ & {$F_{.}^{c}\left(.\right)$} &  {Complementary CDF} \tabularnewline
				\hline
				$ P_{I} $ & Maximum tolerated interference power at PU$_{Rx} $& {$f_{.}\left( .\right) $} &  {Probability density function (PDF)}\tabularnewline
				\hline
				$P_{Tx}$ & Transmit power of node $Tx\in\{\mathbf{S},\mathbf{S_{J}},\mathbf{R}\}$ & {$\gamma _{inc}\left(.,.\right) $} & Lower incomplete Gamma function \cite[Eq. (8.350.1)]{Table} \tabularnewline
				\hline
				$x_{Tx}$ & Transmitted signal from the node $Tx$ & {$ \Gamma\left(.,. \right)  $} & {Upper incomplete Gamma function \cite[Eq. (8.350.2)]{Table}} \tabularnewline
				\hline
				$h_{X}$  & Fading amplitudes, $X\in \{SR,$ $SE_{1,}$ $S_{J}E_{1}\}$ & $ I_{Z} $ & Channel irradiance\tabularnewline
				\hline
				$n_{Rx}$ & Additive white Gaussian noise(AWGN) at $Rx\in\{\mathbf{R},$ $\mathbf{D},$ $\mathbf{E}_{1},$ $\mathbf{E}_{2}\}$ & {$\gamma_{th} $} & {Decoding threshold SNR}  \tabularnewline 
				\hline
				$\omega _{Z}$&  Portion of power received by $\mathbf{Z}${'s} photodetector, $\mathbf{Z}\in\{\mathbf{D},\mathbf{E_{2}}\}$
				& {$m_{X}$}& {Fading severity parameter} \tabularnewline
				\hline
				$\eta$ & Optical-to-electrical conversion ratio
				& {$ b_{X} $} &  {Half average power of the multi-path component}\tabularnewline
				\hline
				$ N $ & AWGN power & {$\Omega_{X}$} &  {Average power of LOS component} \tabularnewline
				\hline
				$d_{Z}$ & Distance between the satellite and the node $\mathbf{Z}$& {$ r $} &  {Detection technique parameter}\tabularnewline
				\hline
				$\phi $ & Path{-}loss exponent& $ \alpha_{Z},\beta_{Z} $ & Turbulence-induced fading parameters \tabularnewline
				\hline
				$\gamma_{R}$ & SNR at $\mathbf{R}$ & $\gamma_{E_{1}}$ & SNR at $\mathbf{E_{1}}$ in the absence of a jammer \tabularnewline
				\hline
				$\gamma^{(J)}_{E_{1}}$ & SNR at $\mathbf{E_{1}}$ in the presence of a jammer &$\gamma_{Z}$& SNR at node $\mathbf{Z}$ \tabularnewline
				\hline
				$ G^{m,n}_{p,q} $& Meijer's G-function \cite[Eq. (9.301)]{Table} & $\mathcal{U}_{Q}$ &  SNR at $\mathbf{E_1}$ \tabularnewline
				\hline
				\hline
			\end{tabular}
			\par
		\end{centering}
		\label{nota}
	\end{table}
\end{figure*} 
\section{System and Channel Models}
We consider an underlay cognitive satellite system, presented in Fig. 1, consisting of two SU sources, namely a data source $\mathbf{S}$, a jammer source $\mathbf{S}_{J},$ satellite $\mathbf{R}$ that serves as a relay, one optical ground station $\mathbf{D}$, two eavesdroppers $\mathbf{E}_{1}$ and $\mathbf{E}_{2}$ intercepting the communication at the first and the second hop, respectively, one primary transmitter PU$_{T_{x}}$, and one primary receiver PU$_{R_{x}}$. It is worth mentioning that the jamming signal sent by $\mathbf{S}_{J}$ can be canceled at the legitimate receiver (i.e. $\mathbf{R}$), while the eavesdropper $\mathbf{E}_{1}$ is not able to decode it. Without loss of generality, the source communicates with the optimal relay through an RF-link, while at the second hop $\mathbf{R}$ forwards data to $\mathbf{D}$ through an optical feeder link. 
\begin{figure}[tbp]
	\centering
	\includegraphics[scale=0.38]{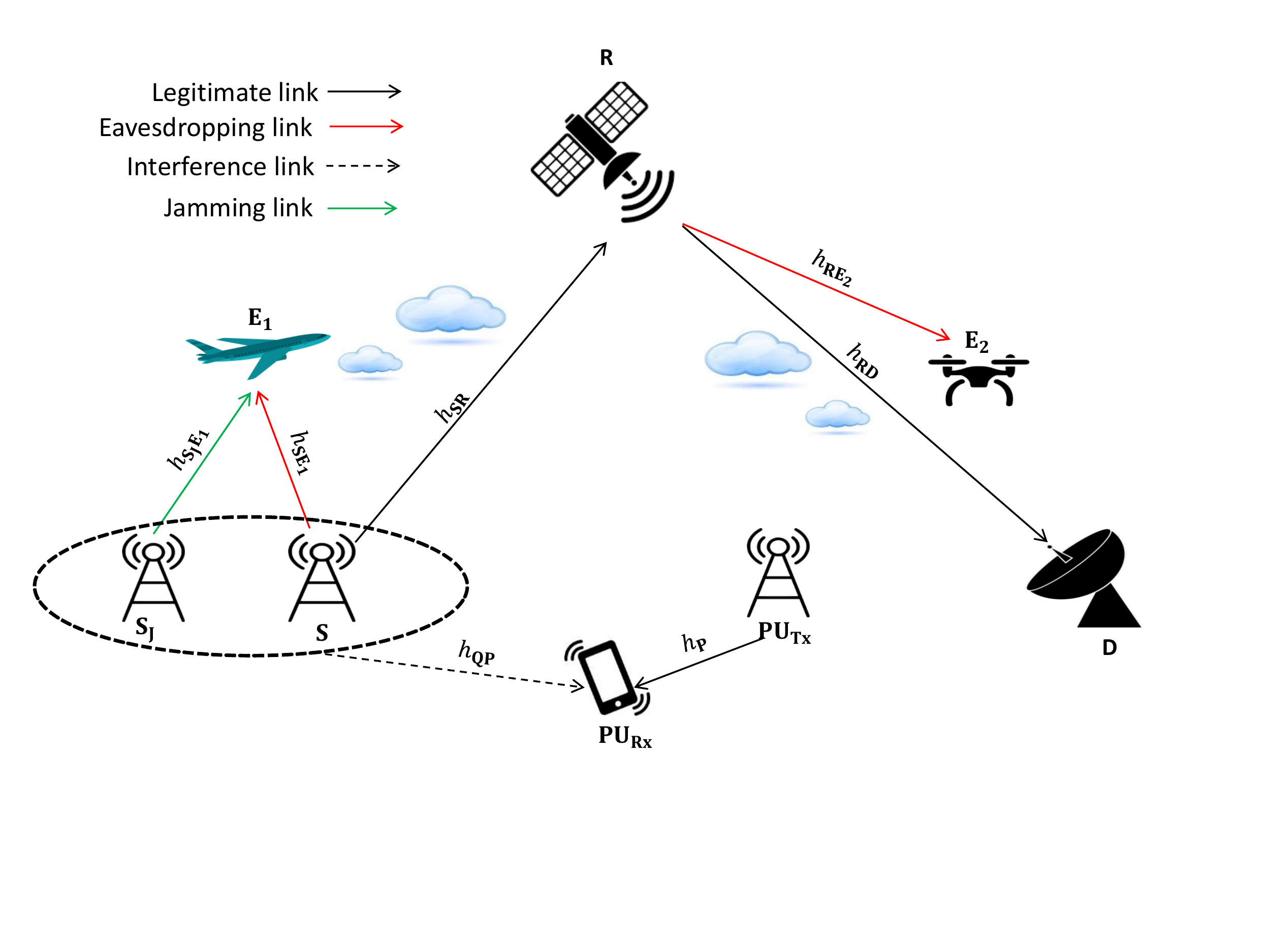}
	\caption{System model.}
\end{figure}
The received signal at $\mathbf{R}$ and $\mathbf{E}_{1}$ are, respectively, given by 
\begin{equation}
y_{R}=\sqrt{P_{S}}x_{S}h_{SR}+n_{R},  \label{Y_R}
\end{equation}
\begin{equation}
y_{E_{1}}=\sqrt{P_{S}}x_{S}h_{SE_{1}}+\varepsilon \sqrt{P_{S_{J}}}%
x_{S_{J}}h_{S_{J}E_{1}}+n_{E_{1}},
\end{equation}%
while, the received signal at node {$\mathbf{Z}\in \left\{ \mathbf{D},%
	\mathbf{E}_{2}\right\} $} is given by 
\begin{equation}
y_{Z}=\sqrt{P_{R}\omega _{Z}}\left( \eta I_{Z}\right) ^{\frac{r}{2}%
}x_{R}+n_{Z},  \label{Y_D}
\end{equation}%
where%
{, $\varepsilon$ is either equal to $ 0 $ or to $ 1 $ in
	absence or presence of artificial noise, respectively.} $P_{Tx}$, $x_{Tx}$%
, $n_{Rx},$ $\eta ,$ $\omega _{Z}$ are defined in Table \ref{nota}, $I_{Z}$
is the irradiance of the link $\mathbf{R}$-$\mathbf{D}$, $\left( {i.e.,}%
I_{Z}=I_{Z}^{(a)}I_{Z}^{(p)}I_{Z}^{(l)}\right) $ defined as the product of
the irradiance fluctuation caused by atmospheric turbulence, the pointing
error due to the beam misalignment, and the free-space path loss (FSPL),
respectively. {Of note, the latter irradiance can be expressed as} $%
I_{Z}^{(l)}=I_{t}e^{-\phi d_{Z}},$ with $I_{t}$ stands for the laser
emittance. 
{For simplicity reasons, the FSPL’s irradiance
	is considered normalized to unity ($ \phi=0 $)}. Furthermore, $r$ refers to
the detection technique index (i.e., $r=1$ for coherent detection and $r=2$
for direct detection).

To avoid interference with the PU signal, the SNR at $\mathbf{R}$ can be
characterized as 
\begin{equation}
\gamma _{R}=\min \left( \overline{\gamma }_{S},\frac{\overline{\gamma }_{I}}{%
	g_{SP}}\right) g_{SR},  \label{GammaR}
\end{equation}%
while the SNR at $\mathbf{E}_{1}$ {in the case of} absence and presence of a
friendly jammer are, respectively, given by 
\begin{equation}
\gamma _{E_{1}}={\mathcal{U}_{S}},  \label{gammaE1_without_jammer}
\end{equation}
and 
\begin{equation}
\gamma _{E_{1}}^{\left( J\right) }={\frac{\mathcal{U}_{S}}{\mathcal{U}%
		_{S_{J}}+1}},  \label{gammaE1}
\end{equation}
where 
\begin{equation}
{\mathcal{U}_{Q}=\min \left( \overline{\gamma }_{Q},\frac{\overline{\gamma }%
		_{I}}{g_{QP}}\right) g_{QE_{1}},\ \mathbf{Q}\ \in \left\{ \mathbf{S},\mathbf{%
		S}_{J}\right\} ,}  \label{UQ}
\end{equation}%
$\overline{\gamma }_{Q}=P_{Q}^{\max }/N,$ $\ $ and $\overline{\gamma }
_{I}=P_{I}/N$, 
{with $ N $ presents the thermal power noise
	at the receivers, assumed identical.}

Furthermore, the SNR at node $\mathbf{Z}\in\left\{\mathbf{D},\mathbf{E}%
_{2}\right\} $ can be straightforwardly expressed from (\ref{Y_D}) as 
\begin{equation}
\gamma_{Z}=\frac{P_{R}\omega_{Z}\left( \eta I_{Z}\right)^{r}}{N}.
\label{gammaZ}
\end{equation}

Given that the satellite {$\mathbf{R}$} performs {the} DF protocol, the
equivalent SNR of the end-to-end link is given by 
\begin{equation}
\gamma_{eq}=\min \left( \gamma_{R},\gamma _{D}\right).
\label{gamma_equivalent}
\end{equation}
All fading amplitudes are assumed to be independent and identically distributed (i.i.d). Specifically, the amplitudes of the terrestrial (i.e. $\mathbf{S}$-PU$_{Rx}$ and $\mathbf{S}_{J}$-PU$_{Rx}$) and uplink channels (i.e. $\mathbf{S}$-$\mathbf{R}$, $\mathbf{S}$-$\mathbf{E}_{1},$ and $\mathbf{S}_{J}$-$\mathbf{E}_{1}$) are Rayleigh and shadowed-Rician distributed, respectively. Therefore, the PDF of the channel gain corresponding to the latter channel can be characterized as 
\begin{align}
f_{g_{X}}\left( x\right) & =\Delta _{X}e^{-\beta _{X}x}\text{ }
_{1}F_{1}\left( m_{X};1;\delta _{X}x\right)  \label{pdf_hop1} \\
& \overset{(a)}{=}\Delta _{X}e^{-\upsilon _{X}x}\sum_{n=0}^{m_{X}-1}\phi
_{X}^{(n)}x^{n},\text{ }X\in \{SR,SE_{1,}S_{J}E_{1}\},  \notag
\end{align}
where $\Delta _{X}=\frac{1}{2b_{X}}\left( \frac{2b_{X}m_{X}}{2b_{X}m_{X}+\Omega_{X}}\right) ^{m_{X}},$ $\upsilon_{X}=\beta_{X}-\delta_{X}$ $,$ $\beta _{X}=\frac{1}{2b_{X}},$ $\delta_{X}=\frac{\beta_{X}\Omega_{X}}{2b_{X}m_{X}+\Omega _{X}},$ $\phi _{X}^{(n)}=\frac{\left(m_{X}-1\right) !\delta _{X}^{n}}{\left( m_{X}-1-n\right) !\left(n!\right)^{2}}$, 
$_{1}F_{1}(.;.;.)$ denotes the confluent hypergeometric function \cite[Eq. (9.210)]{Table}, and step ($a$) follows by assuming that $m_{X}$ is a positive-valued number and by using jointly Eqs. (06.10.02.0003.01) and (07.20.03.0025.01) of \cite{Wolfram}. The corresponding CDF can be straightforwardly obtained from the above PDF as 
\begin{eqnarray}
F_{g_{X}}\left( x\right) &=&\Delta _{X}{}\sum_{n=0}^{m_{X}-1}\phi_{X}^{(n)}\int_{0}^{x}t^{^{n}}e^{-\upsilon _{X}t}dt  \label{cdf_hop1} \\
&=&\Delta _{X}{}\sum_{n=0}^{m_{X}-1}{\frac{\phi _{X}^{(n)}}{\upsilon_{X}^{n+1}}\gamma _{inc}\left( {\small n+1},\upsilon _{X}x\right)}
{,}  \notag
\end{eqnarray}
where $ \gamma_{inc}\left(.,.\right) $ denotes the lower incomplete Gamma function \cite[Eq. (8.350.1)]{Table}.

On the other hand, as the atmospheric turbulence induced-fading with pointing error for the received optical beam is modeled with Gamma-Gamma distribution, the PDF and CDF of the SNR $\gamma_{Z}$ are expressed, respectively as \cite{zedini} 
\begin{align}
f_{\gamma _{Z}}(z)& =\frac{\mathcal{O}_{Z}}{rz}G_{1,3}^{3,0}\left(\Upsilon
_{Z}\sqrt[r]{\frac{z}{\mu _{r}^{(Z)}}}\left\vert 
\begin{array}{c}
-;\xi _{Z}^{2}+1 \\ 
\xi _{Z}^{2},\alpha _{Z},\beta _{Z};-
\end{array}
\right. \right) ,  \label{pdfx} \\
F_{\gamma _{Z}}(z)& =\frac{r^{\alpha _{Z}+\beta _{Z}-2}\mathcal{O}_{Z}}{
	\left( 2\pi \right) ^{r-1}}G_{r+1,3r+1}^{3r,1}\left( \frac{\Upsilon
	_{Z}^{r}z }{r^{2r}\mu _{r}^{(Z)}}\left\vert 
\begin{array}{c}
1;\kappa_{1}^{(Z)} \\ 
\kappa_{2}^{(Z)};0%
\end{array}%
\right. \right),  \label{cdfx}
\end{align}
where {$\sqrt[r]{.}$ denotes the $r$th root}, $\mathcal{O}%
_{Z}=\frac{\xi _{Z}^{2}}{\Gamma \left( \alpha _{Z}\right) \Gamma \left(
	\beta _{Z}\right) },$ $\mu _{r}^{(Z)}=\mathbb{E}\left[ \gamma _{Z}\right] $, 
$\Upsilon _{Z}=\frac{\xi _{Z}^{2}\alpha _{Z}\beta _{Z}}{\xi _{Z}^{2}+1},$ $%
\kappa _{1}^{(Z)}=\left\{ \frac{\xi _{Z}^{2}+i}{r}\right\} _{i=1..r},$ $%
\kappa _{2}^{(Z)}=\left\{ \frac{\xi _{Z}^{2}+i}{r},\frac{\alpha _{Z}+i}{r},%
\frac{\beta _{Z}+i}{r}\right\} _{i{\small =0..r-1}},$ and $%
G_{p,q}^{m,n}\left( z\left\vert 
\begin{array}{c}
(a_{i})_{i\leq p} \\ 
(b_{k})_{k\leq q}%
\end{array}%
\right. \right) $ denotes the Meijer G-function \cite[Eq. (9.301)]{Table}

\section{Intercept Probability}

In this section, we derive 
{closed-form and asymptotic
	expressions for the IP of the considered HSTCN}. The IP is defined as the
probability of {the} legitimate link capacity falls below the wiretap
channel one. 
\begin{equation}
\text{IP}=\Pr \left( C_{\sec }\leq 0\right) ,  \label{IP1}
\end{equation}%
where $C_{\sec }$ represents the {system's secrecy capacity} and it can be
expressed as {\ 
	\begin{equation}
	C_{\sec }=\min \left( C_{1S},C_{2S}\right) ,  \label{Cs1}
	\end{equation}%
	with 
	\begin{equation}
	C_{1S}=\log _{2}\left( \frac{1+\gamma _{R}}{1+\gamma _{E_{1}}}\right) ,
	\label{cs1exp1}
	\end{equation}%
	and 
	\begin{equation}
	C_{2S}=\min \left( \log _{2}\left( \frac{1+\gamma _{R}}{1+\gamma _{E_{2}}}%
	\right) ,\log _{2}\left( \frac{1+\gamma _{D}}{1+\gamma _{E_{2}}}\right)
	\right) .  \label{c2s2xp2}
	\end{equation}%
}

\begin{remark}
	One can see from (\ref{IP1}) that the system's secrecy can be enhanced by
	increasing $C_{\sec }$. This can be achieved by either increasing the SNRs
	at legitimate nodes {(i.e., $\mathbf{R} $ and $\mathbf{D} $)} or decreasing
	the SNRs at {the} eavesdroppers at both hops. {To this end, the system's
		secrecy} is impacted by {two} main factors: {(i)} transmit power of the
	sources, and {(ii) fading severity exhibited by different channels}. Owing
	to that, it can be clearly noticed from (\ref{GammaR}) and {(\ref{gammaE1})}
	that increasing either $\overline{\gamma}_{I} $ or $\overline{\gamma}_{S}$
	enhances the SNR at $\mathbf{R}$, while increasing $\overline{\gamma}
	_{S_{J}} $ decreases the SNR at $\mathbf{E_{1}}$. Particularly, the presence
	of a friendly jammer decreases this latter metric as can be ascertained in (%
	\ref{gammaE1}) and (\ref{gammaE1_without_jammer}). Furthermore, above a
	certain threshold of either $\overline{\gamma}_{I}$ or $\overline{\gamma}%
	_{S} $, both legitimate and eavesdropper SNRs depend exclusively of either $%
	\overline{\gamma}_{S}$ or $\overline{\gamma}_{I}$, respectively as can be
	observed in (\ref{gammaE1})-(\ref{UQ}). Consequently, the IP remains steady
	in both aforementioned cases. Likewise, it can be noticed from (\ref{gammaZ}%
	) that the capacities of the second-hop channels are affected by various
	parameters including the satellite's transmit power and the average powers
	of both LOS and multipath components of the downlink turbulence channel.
\end{remark}

\subsection{New framework for the \emph{IP}}

In order to derive the closed-form and asymptotic expressions for the IP of
the considered HSTCN, we have to provide first a framework for IP's
evaluation of a dual-hop cognitive system in the presence of an
eavesdropper at each hop when the relay performs the DF protocol. Next, both
closed-form and asymptotic expressions of {the} IP for the
considered system are provided under two scenarios, namely (i) absence, and
(ii) presence of a friendly jamming signal.

\begin{lemma}
	For a dual-hop cognitive network-aided DF relaying protocol {experiencing
		generalized fading models {over} the intercepting attempt of
		two} wiretappers $\mathbf{E_{1}}$ and $\mathbf{E_{2}}$ at the first and the
	second hop, respectively, {the} \emph{IP} can be evaluated as 
	\begin{equation}
	\emph{IP}=1-\int_{u=0}^{\infty }f_{g_{SP}}\left( u\right) \mathcal{R}
	_{1}\left( u\right) du,  \label{IP_new_expression}
	\end{equation}%
	where 
	\begin{equation}
	\mathcal{R}_{1}\left( u\right) =\int_{y=\gamma _{th}}^{\infty }\mathcal{J}%
	_{1}\left( y,u\right) \mathcal{J}_{2}\left( y\right) dy,  \label{U1}
	\end{equation}%
	\begin{equation}
	\mathcal{J}_{1}\left( y,u\right) =f_{\gamma _{R}|g_{SP}=u}\left( y\right)
	F_{\gamma _{E_{1}}|g_{SP}=u}\left( y\right) ,  \label{J1}
	\end{equation}%
	and 
	\begin{equation}
	\mathcal{J}_{2}\left( y\right) =\int_{z=0}^{y}f_{\gamma _{E_{2}}}\left(
	z\right) F_{\gamma _{D}}^{c}\left( z\right) dz,  \label{J2}
	\end{equation}
	{where $ F^{c}_{.}(.) $ denotes the complementary CDF.} \label%
	{lemma1}
\end{lemma}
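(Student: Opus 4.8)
The plan is to recast the intercept event $\{C_{\sec}\leq 0\}$ of (\ref{IP1}) as a joint event on the four instantaneous SNRs $\gamma_{R},\gamma_{E_{1}},\gamma_{D},\gamma_{E_{2}}$, and then to evaluate its complementary probability by successive conditioning that exploits the statistical structure imposed by the underlay power control. First I would rewrite the non-interception event: by the strict monotonicity of $\log_{2}(\cdot)$ and (\ref{Cs1})--(\ref{c2s2xp2}), $C_{1S}>0\Leftrightarrow\gamma_{R}>\gamma_{E_{1}}$ while $C_{2S}>0\Leftrightarrow\left\{\gamma_{R}>\gamma_{E_{2}}\right\}\cap\left\{\gamma_{D}>\gamma_{E_{2}}\right\}$, i.e.\ $\min(\gamma_{R},\gamma_{D})>\gamma_{E_{2}}$. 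Since $\mathbf{R}$ operates in DF mode, a reliable---hence a reliable and secure---end-to-end delivery additionally requires $\mathbf{R}$ to decode, i.e.\ $\gamma_{R}\geq\gamma_{th}$; I would therefore take the event whose probability equals $1-\text{IP}$ to be
\[
\mathcal{G}=\left\{\gamma_{R}\geq\gamma_{th}\right\}\cap\left\{\gamma_{R}>\gamma_{E_{1}}\right\}\cap\left\{\min(\gamma_{R},\gamma_{D})>\gamma_{E_{2}}\right\}.
\]

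Next I would condition on $g_{SP}=u$ and on $\gamma_{R}=y$. The key structural observation, read off (\ref{GammaR}) and (\ref{UQ}), is that $\gamma_{R}$ and $\gamma_{E_{1}}$ share only the common power-control factor $\min(\overline{\gamma}_{S},\overline{\gamma}_{I}/g_{SP})$; once $g_{SP}=u$ is fixed this factor is deterministic, so $\gamma_{R}$ and $\gamma_{E_{1}}$ are conditionally independent (being functions of the mutually independent gains $g_{SR}$, $g_{SE_{1}}$---and, in the jammer scenario, of $g_{S_{J}P}$, $g_{S_{J}E_{1}}$---all independent of $g_{SP}$). Likewise $\gamma_{D}$ and $\gamma_{E_{2}}$ depend solely on the optical irradiances $I_{D}$, $I_{E_{2}}$, hence are independent of each other and of every first-hop quantity. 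Using this, for every $y\geq\gamma_{th}$,
\[
\Pr\!\left(\mathcal{G}\mid g_{SP}=u,\,\gamma_{R}=y\right)=F_{\gamma_{E_{1}}|g_{SP}=u}(y)\int_{z=0}^{y}f_{\gamma_{E_{2}}}(z)\,F_{\gamma_{D}}^{c}(z)\,dz=F_{\gamma_{E_{1}}|g_{SP}=u}(y)\,\mathcal{J}_{2}(y),
\]
where the inner integral is $\Pr\!\left(\gamma_{E_{2}}<\min(y,\gamma_{D})\right)$, obtained by conditioning on $\gamma_{E_{2}}=z$ and writing $F_{\gamma_{D}}^{c}(z)=\Pr(\gamma_{D}>z)$.

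Then, integrating over $\gamma_{R}$ with density $f_{\gamma_{R}|g_{SP}=u}$ restricted to $[\gamma_{th},\infty)$ yields $\Pr(\mathcal{G}\mid g_{SP}=u)=\int_{y=\gamma_{th}}^{\infty}\mathcal{J}_{1}(y,u)\,\mathcal{J}_{2}(y)\,dy=\mathcal{R}_{1}(u)$ with $\mathcal{J}_{1}(y,u)=f_{\gamma_{R}|g_{SP}=u}(y)F_{\gamma_{E_{1}}|g_{SP}=u}(y)$ as in (\ref{J1}), and deconditioning over $g_{SP}$ gives $\Pr(\mathcal{G})=\int_{0}^{\infty}f_{g_{SP}}(u)\,\mathcal{R}_{1}(u)\,du$, i.e.\ (\ref{IP_new_expression}).

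I expect the main obstacle to be the careful bookkeeping of the (conditional) independence structure: one must argue precisely that conditioning on $g_{SP}$ is what decouples $\gamma_{R}$ from $\gamma_{E_{1}}$ under the underlay constraint, and that the second-hop pair $(\gamma_{D},\gamma_{E_{2}})$ is independent of the first hop, so that the nested integrals factor as written and no cross-terms survive. A secondary subtlety is pinning down the role of the DF threshold $\gamma_{th}$---namely that a secure delivery is credited only when $\mathbf{R}$ can decode, so $\{\gamma_{R}<\gamma_{th}\}$ is absorbed into the intercept event---which is exactly what produces the lower limit $\gamma_{th}$ in (\ref{U1}).
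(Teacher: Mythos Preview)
Your argument is correct and, in fact, more direct than the paper's. Both proofs first reduce (\ref{IP1}) to
\[
\text{IP}=1-\Pr\!\left(\gamma_{R}>\gamma_{E_{1}},\ \gamma_{R}>\gamma_{E_{2}},\ \gamma_{D}>\gamma_{E_{2}},\ \gamma_{R}>\gamma_{th}\right),
\]
using the monotonicity of $\log_{2}$ together with the DF convention that $\{\gamma_{R}<\gamma_{th}\}$ forces an intercept. From this point, however, the paper partitions the complementary event into six mutually exclusive cases according to the relative ordering of $\gamma_{R},\gamma_{E_{1}},\gamma_{E_{2}},\gamma_{th}$ (Table~\ref{events}), evaluates each $\mathcal{I}_{i}$ separately via integration by parts and elementary CDF manipulations (Eqs.~(\ref{I1})--(\ref{I6})), and then sums the six pieces to recover (\ref{IP_new_expression}). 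You instead condition on $g_{SP}=u$ and $\gamma_{R}=y$, invoke the conditional independence of $\gamma_{R}$ and $\gamma_{E_{1}}$ (once the shared power-control factor is frozen) and the independence of the optical pair $(\gamma_{D},\gamma_{E_{2}})$ from the first hop, and read off the factorization $F_{\gamma_{E_{1}}|g_{SP}=u}(y)\,\mathcal{J}_{2}(y)$ in one step. Your route avoids the six-fold case split entirely and makes transparent \emph{why} the integrand factors as $\mathcal{J}_{1}(y,u)\mathcal{J}_{2}(y)$; the paper's enumeration, while yielding the same answer, obscures this structure and requires nontrivial cancellation among the $\mathcal{I}_{i}$ to collapse to the compact form (\ref{U1}).
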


\begin{proof}
	The proof is provided in Appendix A. \qedhere
\end{proof}

\subsection{IP closed-form}

\begin{theorem}
	The closed-form expressions for IP of the considered HSTCN {under the
		considered fading models is given} by (\ref{IP_final}) as shown at the top
	of the next page, {for both absence (i.e. $\mathcal{E}=A $) and presence
		(i.e. $\mathcal{E}=P $) of a friendly jamming}, where \label{theorem1}
\end{theorem}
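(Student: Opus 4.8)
The plan is to evaluate, in closed form and under the prescribed fading models, each of the three building blocks $\mathcal{J}_{1}$, $\mathcal{J}_{2}$ and the nested integrals of Lemma~\ref{lemma1}, and then to perform the two remaining one-dimensional integrations in (\ref{IP_new_expression})--(\ref{U1}).

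First, I would compute the conditional first-hop statistics. Because $\gamma_{R}$ in (\ref{GammaR}) and $\mathcal{U}_{S}$ in (\ref{UQ}) share the same interference link $g_{SP}$, conditioning on $g_{SP}=u$ replaces the random power-control factor by the deterministic scalar $\theta(u):=\min\!\left(\overline{\gamma}_{S},\overline{\gamma}_{I}/u\right)$, so that $\gamma_{R}\big|_{g_{SP}=u}=\theta(u)\,g_{SR}$ and, in the jammer-free case (\ref{gammaE1_without_jammer}), $\gamma_{E_{1}}\big|_{g_{SP}=u}=\theta(u)\,g_{SE_{1}}$. Thus $f_{\gamma_{R}|g_{SP}=u}$ and $F_{\gamma_{E_{1}}|g_{SP}=u}$ follow from (\ref{pdf_hop1}) and (\ref{cdf_hop1}) by rescaling the argument by $\theta(u)$. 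The crossover $u=\overline{\gamma}_{I}/\overline{\gamma}_{S}$ splits the outer $u$-integral into a region where $\theta\equiv\overline{\gamma}_{S}$ — there $\mathcal{R}_{1}$ is $u$-independent and the exponential density $f_{g_{SP}}$ integrates trivially — and a region where $\theta=\overline{\gamma}_{I}/u$, where the $u$-dependence is of exponential-polynomial type and the resulting $u$-integrals reduce to incomplete-Gamma / Meijer-$G$ forms.

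Next comes the jammer scenario of (\ref{gammaE1}), which is the most laborious part. I would first obtain the unconditional law of $\mathcal{U}_{S_{J}}=\min\!\left(\overline{\gamma}_{S_{J}},\overline{\gamma}_{I}/g_{S_{J}P}\right)g_{S_{J}E_{1}}$ by averaging the rescaled shadowed-Rician density against the exponential density of its own interference link $g_{S_{J}P}$, again with a crossover split at $g_{S_{J}P}=\overline{\gamma}_{I}/\overline{\gamma}_{S_{J}}$; this produces a finite sum of exponential-polynomial terms plus terms carrying lower incomplete Gamma factors. Then, by independence, $F_{\gamma_{E_{1}}^{(J)}|g_{SP}=u}(y)=\Pr\!\left(\mathcal{U}_{S}\le y(\mathcal{U}_{S_{J}}+1)\,\big|\,g_{SP}=u\right)=\int_{0}^{\infty}F_{\mathcal{U}_{S}|g_{SP}=u}\!\left(y(t+1)\right)f_{\mathcal{U}_{S_{J}}}(t)\,dt$; expanding the lower incomplete Gamma in the rescaled (\ref{cdf_hop1}) as a finite exponential sum and the binomials $(t+1)^{j}$ term by term renders the $t$-integral elementary (Gamma-type), with any $u$-dependence carried along.

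Finally, for the second hop I would write $\mathcal{J}_{2}(y)=F_{\gamma_{E_{2}}}(y)-\int_{0}^{y}f_{\gamma_{E_{2}}}(z)F_{\gamma_{D}}(z)\,dz$ and insert the Gamma-Gamma expressions (\ref{pdfx})--(\ref{cdfx}); the truncated integral of a product of two Meijer $G$-functions closes in terms of a single higher-order Meijer $G$ (equivalently a Fox $H$) in $y$ by the standard identity for $\int_{0}^{y}z^{s-1}G(\cdot)G(\cdot)\,dz$. Then $\mathcal{R}_{1}(u)=\int_{\gamma_{th}}^{\infty}\mathcal{J}_{1}(y,u)\,\mathcal{J}_{2}(y)\,dy$ is a one-dimensional integral of an exponential-polynomial in $y$ (from the previous steps) against this special function; writing the latter as a Mellin--Barnes integral and interchanging orders, the inner $y$-integral is again of incomplete-Gamma / Meijer-$G$ type, the contour recollapses to a Meijer $G$, and substituting into (\ref{IP_new_expression}) and carrying out the (already reduced) $u$-integration for $\mathcal{E}\in\{A,P\}$ yields (\ref{IP_final}). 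The principal obstacle is not any individual integral — each is of standard incomplete-Gamma or Meijer-$G$ type — but the bookkeeping across these three nested layers: keeping the $u=\overline{\gamma}_{I}/\overline{\gamma}_{S}$ and $g_{S_{J}P}=\overline{\gamma}_{I}/\overline{\gamma}_{S_{J}}$ crossover splits mutually consistent through $\gamma_{R}$, $\mathcal{U}_{S}$ and $\mathcal{U}_{S_{J}}$, and justifying by absolute convergence each interchange of integration order — in particular with the Mellin--Barnes contours of the Gamma-Gamma pieces — so that no term is lost.
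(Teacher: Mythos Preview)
Your plan is essentially the paper's own strategy: condition on $g_{SP}$, split the $u$-integral at $\sigma_{S}=\overline{\gamma}_{I}/\overline{\gamma}_{S}$, build $F_{\mathcal{U}_{S_{J}}}$ by a second split at $\sigma_{S_{J}}$, and then perform the $y$- and $u$-integrations layer by layer. Two technical choices differ from the paper. First, for the jammer case the paper obtains $F_{\gamma_{E_{1}}^{(J)}|g_{SP}=u}$ via an integration by parts, writing it as $1-\frac{y}{\Omega(u)}\int_{0}^{\infty}f_{g_{SE_{1}}}\!\left(\frac{y(t+1)}{\Omega(u)}\right)F_{\mathcal{U}_{S_{J}}}(t)\,dt$, whereas you condition the other way and integrate $F_{\mathcal{U}_{S}|g_{SP}=u}$ against $f_{\mathcal{U}_{S_{J}}}$; both lead to the same finite exponential--polynomial structure. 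Second, and more consequential for matching the stated result, the paper does \emph{not} close $\int_{0}^{y}f_{\gamma_{E_{2}}}(z)F_{\gamma_{D}}(z)\,dz$ as a single Meijer~$G$: it applies the residue expansion (their Eq.~(\ref{residd})) to one of the two $G$-functions, producing the explicit infinite sums over $k$ with coefficients $e^{(Z,\tau)}$ and $e^{(Z,\tau,k)}$ that appear in (\ref{IP_final}); the same expansion is used a second time inside $\mathcal{R}_{1}$. Your ``single higher-order $G$'' would in general be a \emph{bivariate} Meijer/Fox object (the two arguments $\Upsilon_{D}/\mu_{1}^{(D)}$ and $\Upsilon_{E_{2}}/\mu_{1}^{(E_{2})}$ are independent), so your route is valid but would not directly deliver the series form of (\ref{IP_final}) without a further residue step. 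In short: your outline is correct and equivalent; to reproduce (\ref{IP_final}) verbatim you would still need to invoke the residue expansion of the second-hop $G$-functions rather than a univariate closure.
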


\begin{equation}
e^{\left( Z,\tau \right) }=\frac{\Gamma \left( \alpha _{Z}-\xi
	_{Z}^{2}\right) \Gamma \left( \beta _{Z}-\xi _{Z}^{2}\right) }{\tau +\xi
	_{Z}^{2}},
\end{equation}%
\begin{align}
e^{(Z,\tau ,k)}\left( x,y\right) & =\frac{\left( -1\right) ^{k}\Gamma \left(
	x-y-k\right) }{k!\left( \xi _{Z}^{2}-y-k\right) \left( \tau +y+k\right) }, \\
\xi _{Z}^{2}& \neq y+k,  \notag
\end{align}
\begin{equation}
\mathcal{B}^{(n_{1},n_{2},n_{3},p)}=\frac{\binom{n_{2}}{p}%
	\phi_{S_{J}E_{1}}^{(n_{1})}\phi _{SE_{1}}^{(n_{2})}\phi _{SR}^{(n_{3})}}{
	\upsilon _{S_{J}E_{1}}^{n_{1}+1}\upsilon _{SE_{1}}^{p}},
\end{equation}

\begin{align}
\Psi _{2}^{\left( n_{1,}n_{2},n_{3},p,a,\tau \right) }& =\frac{\varrho
	_{D}^{\tau }\left( \Upsilon _{E_{2}}\varrho _{E_{2}}\right) ^{a}}{\zeta
	^{n_{2}+n_{3}-p+1+a+\tau }}  \label{Psi2} \\
& \times \left[ 
\begin{array}{c}
F_{g_{SP}}\left( \sigma _{S}\right) \mathcal{M}^{\left(
	n_{1,}n_{2},n_{3},p,a,\tau \right) } \\ 
+\lambda _{SP}\sigma _{S}^{a+\tau }\mathcal{Y}^{(n_{1,}n_{2},n_{3},p,a,\tau
	)}%
\end{array}
\right],  \notag
\end{align}

\begin{align}
\Psi _{3}^{\left( n_{1,}n_{2},n_{3},p,a,\tau \right) }& =\frac{\varrho
	_{D}^{\tau }\left( \Upsilon _{E_{2}}\varrho _{E_{2}}\right) ^{a}}{\zeta
	^{n_{2}+n_{3}-p+1+a+\tau }}  \label{Psi3} \\
& \times \left[ 
\begin{array}{c}
F_{g_{SP}}\left( \sigma _{S}\right) \Phi ^{(n_{1,}n_{2},n_{3},p,a,\tau )} \\ 
+\lambda _{SP}\sigma _{S}^{a+\tau }\mathcal{W}^{(n_{1,}n_{2},n_{3},p,a,\tau
	)}%
\end{array}%
\right] ,  \notag
\end{align}%
\begin{equation}
{\theta =\mathcal{O}_{D}\times \mathcal{O}_{E_{2}}},
\end{equation}%
\begin{equation}
\varrho _{Z}=\frac{\overline{\gamma }_{S}}{\mu _{1}^{(Z)}},
\end{equation}%
\begin{equation}
\sigma _{Q}=\frac{\overline{\gamma }_{I}}{\overline{\gamma }_{Q}},
\end{equation}%
\begin{equation}
\epsilon _{I}=\frac{\gamma _{th}}{\overline{\gamma }_{I}},
\end{equation}%
\begin{equation}
\chi =\frac{\upsilon _{S_{J}E_{1}}\zeta }{\upsilon _{SE_{1}}},
\end{equation}%
\begin{equation}
\zeta =\upsilon _{SR}+\upsilon _{SE_{1}},
\end{equation}%
$\tau \in \{0,q\},$ $q\in \{\xi _{D}^{2},\alpha _{D}+k,\beta _{D}+k\},$ {$%
	G_{p,q}^{m,n}\left( z\left\vert 
	\begin{array}{c}
	(a_{l},b_{l})_{l\leq p} \\ 
	(c_{u},d_{u})_{u\leq q}%
	\end{array}%
	\right. \right) $ accounts for the upper incomplete Meijer's G-function \cite%
	[Eq. (1.1.1)]{kilbas}, $j=\sqrt{-1}$, {and} $\mathcal{C}_{s}$ and $\mathcal{C%
	}_{w}$ are {two} vertical lines of integration chosen so as to separate left
	poles of 
	{the integrand functions in (\ref{Ya_final}) and
		(\ref{Wa_final})}, from the right ones.} 
\begin{figure*}[!htb]
	\setcounter{mytempeqncnt}{\value{equation}} \setcounter{equation}{32}
	\par
	\begin{equation}
	\text{IP}_{\mathcal{E}}=1-\frac{\xi _{E_{2}}^{2}\mathcal{D}_{\mathcal{E}
		}^{\left( 0\right) }}{\Gamma \left( \alpha _{E_{2}}\right) \Gamma \left(
		\beta _{E_{2}}\right) }+\theta \left[ e^{(D,0)}\mathcal{D}_{\mathcal{E}
	}^{\left( \xi _{D}^{2}\right) }+\sum_{k=0}^{\infty }e^{(D,0,k)}\left( \beta
	_{D},\alpha _{D}\right) \mathcal{D}_{\mathcal{E}}^{\left( \alpha
		_{D}+k\right) }+e^{(D,0,k)}\left( \alpha _{D},\beta _{D}\right) \mathcal{D}%
	_{ \mathcal{E}}^{\left( \beta _{D}+k\right) }\right] ,\text{ }\mathcal{E\in }
	\left\{ A,P\right\}  \label{IP_final}
	\end{equation}
	\hrulefill{} \vspace*{4pt}
\end{figure*}
\begin{figure*}[!htb]
	\setcounter{mytempeqncnt}{\value{equation}} \setcounter{equation}{33}
	\par
	\begin{equation}
	\mathcal{D}_{A}^{\left( \tau \right) }=\Upsilon _{D}^{\tau }\Delta
	_{SE_{1}}\Delta
	_{SR}\sum_{n_{2}=0}^{m_{SE_{1}}-1}\sum_{n_{3}=0}^{m_{SR}-1}{} \frac{\phi
		_{SR}^{(n_{3})}\phi _{SE_{1}}^{(n_{2})}}{\upsilon _{SE_{1}}^{n_{2}+1}}\left[ 
	\begin{array}{c}
	e^{(E_{2},\tau )}\mathcal{N}_{1}^{\left( n_{2},n_{3},\xi _{E_{2}}^{2},\tau
		\right) }+\sum_{k=0}^{\infty }e^{(E_{2},\tau ,k)}\left( \beta
	_{E_{2}},\alpha _{E_{2}}\right) \mathcal{N}_{1}^{\left( n_{2},n_{3},\alpha
		_{E_{2}}+k,\tau \right) } \\ 
	+\sum_{k=0}^{\infty }e^{(E_{2},\tau ,k)}\left( \alpha _{E_{2}},\beta
	_{E_{2}}\right) \mathcal{N}_{1}^{\left( n_{2},n_{3},\beta _{E_{2}}+k,\tau
		\right) }%
	\end{array}
	\right] {.}  \label{Dtau_no_jammer_final}
	\end{equation}
	\hrulefill{} \vspace*{4pt}
\end{figure*}
\begin{figure*}[!htb]
	\setcounter{mytempeqncnt}{\value{equation}} \setcounter{equation}{34}
	\par
	\begin{equation}
	\mathcal{D}_{P}^{\left( \tau \right) }=\Upsilon _{D}^{\tau }\Delta _{SR} %
	\left[ 
	\begin{array}{c}
	\sum_{n_{3}=0}^{m_{SR}-1}\phi _{SR}^{(n_{3})}\Xi _{1}^{\left( n_{3},\tau
		\right) }-\Delta _{SE_{1}}\Delta
	_{S_{J}E_{1}}\sum_{n_{1}=0}^{m_{S_{J}E_{1}}-1}\sum_{n_{2}=0}^{m_{SE_{1}}-1}
	\sum_{p=0}^{n_{2}}\sum_{n_{3}=0}^{m_{SR}-1} \\ 
	\times \mathcal{B}^{(n_{1},n_{2},n_{3},p)}\left\{ F_{g_{S_{J}P}}\left(
	\sigma _{S_{J}}\right) \Xi _{2}^{(n_{1,}n_{2},n_{3},p,\tau )}+\Xi
	_{3}^{(n_{1,}n_{2},n_{3},p,\tau )}\right\}%
	\end{array}
	\right] {.}  \label{Dtau_jammer_final}
	\end{equation}
	\hrulefill{} \vspace*{4pt}
\end{figure*}
\begin{figure*}[!htb]
	\setcounter{mytempeqncnt}{\value{equation}} \setcounter{equation}{35}
	\par
	\begin{equation}
	\mathcal{N}_{1}^{\left( n_{2},n_{3},a,\tau \right) }=\frac{\varrho
		_{D}^{\tau }\left( \Upsilon _{E_{2}}\varrho _{E_{2}}\right) ^{a}}{\upsilon
		_{SR}^{n_{3}+\tau +a+1}}\left[ 
	\begin{array}{c}
	F_{g_{SP}}\left( \sigma _{S}\right) G_{2,2}^{1,2}\left( \frac{\upsilon
		_{SE_{1}}}{\upsilon _{SR}}\left\vert 
	\begin{array}{c}
	\left( 1,0\right) ,\left( -n_{3}-\tau -a,\epsilon _{I}\sigma _{S}\upsilon
	_{SR}\right) ;- \\ 
	\left( {\small n}_{2}{\small +1,0}\right) ;\left( 0,0\right)%
	\end{array}%
	\right. \right) \\ 
	+\frac{\left( \lambda _{SP}\sigma _{S}\right) ^{\tau +a}}{\left( 2\pi
		j\right) ^{2}}\int_{C_{s}}\frac{\Gamma \left( {\small n}_{2}{\small +1+s}%
		\right) \Gamma \left( -{\small s}\right) }{\Gamma \left( 1-{\small s}\right) 
	}\left( \frac{\upsilon _{SE_{1}}}{\upsilon _{SR}}\right) ^{-s} \\ 
	\times \int_{C_{w}}\frac{\Gamma \left( -\tau -a-w+1,\lambda _{SP}\sigma
		_{S}\right) \Gamma \left( n_{3}+\tau +a+1-s+w\right) \Gamma \left( w\right) 
	}{\Gamma \left( 1+w\right) }\left( \frac{\epsilon _{I}\upsilon _{SR}}{%
		\lambda _{SP}}\right) ^{-w}dsdw%
	\end{array}%
	\right] {.}  \label{N1_final}
	\end{equation}%
	\par
	\hrulefill{} \vspace*{4pt}
\end{figure*}
\begin{figure*}[!htb]
	\setcounter{mytempeqncnt}{\value{equation}} \setcounter{equation}{36}
	\par
	\begin{equation}
	\Xi _{1}^{\left( n_{3},\tau \right) }=e^{(E_{2},\tau )}\Psi _{1}^{\left(
		n_{3},\xi _{E_{2}}^{2},\tau \right) }+\sum_{k=0}^{\infty }e^{(E_{2},\tau
		,k)}\left( \beta _{E_{2}},\alpha _{E_{2}}\right) \Psi _{1}^{\left(
		n_{3},\alpha _{E_{2}}+k,\tau \right) }+e^{(E_{2},\tau ,k)}\left( \alpha
	_{E_{2}},\beta _{E_{2}}\right) \Psi _{1}^{\left( n_{3},\beta _{E_{2}}+k,\tau
		\right) }{.}  \label{E1}
	\end{equation}
	\hrulefill{} \vspace*{4pt}
\end{figure*}
\begin{figure*}[!htb]
	\setcounter{mytempeqncnt}{\value{equation}} \setcounter{equation}{37}
	\par
	\begin{equation}
	\Psi _{1}^{\left( n_{3},a,\tau \right) }=\frac{\varrho _{D}^{\tau }\left(
		\Upsilon _{E_{2}}\varrho _{E_{2}}\right) ^{a}}{\upsilon _{SR}^{n_{3}+\tau
			+a+1}}\left[ 
	\begin{array}{c}
	F_{g_{SP}}\left( \sigma _{S}\right) \Gamma \left( n_{3}+\tau +a+1,\epsilon
	_{I}\sigma _{S}\upsilon _{SR}\right) \\ 
	+\left( \sigma _{S}\lambda _{SP}\right) ^{\tau +a}G_{2,2}^{2,1}\left( \frac{
		\upsilon _{SR}\epsilon _{I}}{\lambda _{SP}}\left\vert 
	\begin{array}{c}
	\left( \tau +a,\sigma _{S}\lambda _{SP}\right) ;\left( 1,0\right) \\ 
	\left( 0,0\right) ,\left( n_{3}+\tau +a+1,0\right) ;-%
	\end{array}
	\right. \right)%
	\end{array}
	\right] {.}  \label{Psi1}
	\end{equation}
	\hrulefill{} \vspace*{4pt}
\end{figure*}
\begin{figure*}[!htb]
	\setcounter{mytempeqncnt}{\value{equation}} \setcounter{equation}{38}
	\par
	\begin{align}
	\Xi _{n}^{(n_{1,}n_{2},n_{3},p,\tau )}& =e^{(E_{2},\tau )}\Psi _{n}^{\left(
		n_{1,}n_{2},n_{3},p,\xi _{E_{2}}^{2},\tau \right) }+\sum_{k=0}^{\infty
	}e^{(E_{2},\tau ,k)}\left( \beta _{E_{2}},\alpha _{E_{2}}\right) \Psi
	_{n}^{\left( n_{1,}n_{2},n_{3},p,\alpha _{E_{2}}+k,\tau \right) }  \notag \\
	& +e^{(E_{2},\tau ,k)}\left( \alpha _{E_{2}},\beta _{E_{2}}\right) \Psi
	_{n}^{\left( n_{1,}n_{2},n_{3},p,\beta _{E_{2}}+k,\tau \right) },\text{ }
	n\in \{2,3\}{.}  \label{Em}
	\end{align}
	\hrulefill{} \vspace*{4pt}
\end{figure*}
\begin{figure*}[!htb]
	\setcounter{mytempeqncnt}{\value{equation}} \setcounter{equation}{39}
	\par
	\begin{equation}
	\mathcal{M}^{(n_{1,}n_{2},n_{3},p,a,\tau )}=G_{2,3}^{2,2}\left( \frac{\chi
		\sigma _{S_{J}}}{\overline{\gamma }_{I}}\left\vert 
	\begin{array}{c}
	\left( -p,0\right) ,\left( 1,0\right) ;- \\ 
	\left( n_{1}+1,0\right) ,\left( n_{2}+n_{3}-p+1+\tau +a,\zeta \epsilon
	_{I}\sigma _{S}\right) ;\left( 0,0\right)%
	\end{array}
	\right. \right) {.}  \label{M2}
	\end{equation}
	\hrulefill{} \vspace*{4pt}
\end{figure*}
\begin{figure*}[!htb]
	\setcounter{mytempeqncnt}{\value{equation}} \setcounter{equation}{40}
	\par
	\begin{align}
	\mathcal{Y}^{(n_{1,}n_{2},n_{3},p,a,\tau )}& =\frac{-\lambda _{SP}^{a+\tau
			-1}}{\left( 2\pi j\right) ^{2}}\int_{\mathcal{C}_{s}}\frac{\Gamma \left(
		n_{1}+1+s\right) \Gamma \left( 1+p-s\right) }{s}\left( \frac{\chi \sigma
		_{S_{J}}}{\overline{\gamma }_{I}}\right) ^{-s}  \label{Ya_final} \\
	& \times \int_{\mathcal{C}_{w}}\frac{\Gamma \left( n_{2}+n_{3}-p+1+\tau
		+a+s+w\right) \Gamma \left( 1-a-\tau -w,\sigma _{S}\lambda _{SP}\right) }{w}
	\left( \frac{\zeta \epsilon _{I}}{\lambda _{SP}}\right) ^{-w}dsdw{.}  \notag
	\end{align}
	\hrulefill{} \vspace*{4pt}
\end{figure*}
\begin{figure*}[!htb]
	\setcounter{mytempeqncnt}{\value{equation}} \setcounter{equation}{41}
	\par
	\begin{equation}
	\Phi ^{(n_{1,}n_{2},n_{3},p,a,\tau )}=G_{3,3}^{2,3}\left( \frac{\chi }{
		\lambda _{S_{J}P}\overline{\gamma }_{I}}\left\vert 
	\begin{array}{c}
	\left( -p,0\right) ,\left( 1,0\right) ,\left( 0,\sigma _{S_{J}}\lambda
	_{S_{J}P}\right) ;- \\ 
	\left( n_{1}+1,0\right) ,\left( n_{2}+n_{3}-p+1+\tau +a,\zeta \epsilon
	_{I}\sigma _{S}\right) ;\left( 0,0\right)%
	\end{array}
	\right. \right) {.}  \label{Phi2}
	\end{equation}%
	\par
	\hrulefill{} \vspace*{4pt}
\end{figure*}
\begin{figure*}[!htb]
	\setcounter{mytempeqncnt}{\value{equation}} \setcounter{equation}{42}
	\par
	\begin{align}
	\mathcal{W}^{(n_{1,}n_{2},n_{3},p,a,\tau )}& =\frac{-\lambda _{SP}^{a+\tau
			-1}}{\left( 2\pi j\right) ^{2}}\int_{\mathcal{C}_{s}}\frac{\Gamma \left(
		n_{1}+1+s\right) \Gamma \left( 1+p-s\right) \Gamma \left( 1-s,\sigma
		_{S_{J}}\lambda _{S_{J}P}\right) }{s}\left( \frac{\chi }{\lambda _{S_{J}P} 
		\overline{\gamma }_{I}}\right) ^{-s}  \label{Wa_final} \\
	& \times \int_{\mathcal{C}_{w}}\frac{\Gamma \left( n_{2}+n_{3}-p+1+a+\tau
		+s+w\right) \Gamma \left( 1-a-\tau -w,\sigma _{S}\lambda _{SP}\right) }{w}
	\left( \frac{\zeta \epsilon _{I}}{\lambda _{SP}}\right) ^{-w}dsdw.  \notag
	\end{align}%
	\par
	\hrulefill{} \vspace*{4pt}
\end{figure*}

\begin{proof}
	The proof is provided in Appendix B. \qedhere
\end{proof}

\subsection{Asymptotic IP}

In this subsection, we provide an asymptotic analysis of the derived
closed-form expression for the IP {in high SNR regime}. 
{It can be noticed from (\ref{M2}), (\ref{Phi2}),
	(\ref{Ya_final}), and (\ref{Wa_final}) that the expression for the IP can be
	approximated for high SNR values by considering
	$\overline{\gamma}_{I}\rightarrow \infty.$}

\begin{theorem}
	The Asymptotic expression for the \emph{IP} in the presence of a friendly jammer is given by (\ref{IP_asymptotic}) as shown at the top of page {7}, with \label{theorem2}
\end{theorem}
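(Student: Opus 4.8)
The plan is to start from the exact result of Theorem~\ref{theorem1} specialised to the jamming case $\mathcal{E}=P$, i.e.\ from (\ref{IP_final}) together with (\ref{Dtau_jammer_final}), and to drive every building block to its leading order as $\overline{\gamma}_{I}\rightarrow\infty$, exactly as announced in the paragraph preceding the statement. The elementary limits that do the work are $\epsilon_{I}=\gamma_{th}/\overline{\gamma}_{I}\rightarrow0$ and $\sigma_{Q}=\overline{\gamma}_{I}/\overline{\gamma}_{Q}\rightarrow\infty$ for $\mathbf{Q}\in\{\mathbf{S},\mathbf{S}_{J}\}$, while the products $\epsilon_{I}\sigma_{Q}=\gamma_{th}/\overline{\gamma}_{Q}$ and $\chi\sigma_{S_{J}}/\overline{\gamma}_{I}=\chi/\overline{\gamma}_{S_{J}}$ remain bounded, and the mixed quantities $\sigma_{Q}\lambda_{QP}$, $\zeta\epsilon_{I}/\lambda_{SP}$, $\chi/(\lambda_{S_{J}P}\overline{\gamma}_{I})$ diverge, vanish and vanish, respectively. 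In particular the Rayleigh CDFs occurring in $\Psi_{1},\Psi_{2},\Psi_{3}$ and in $\mathcal{D}_{P}^{(\tau)}$ satisfy $F_{g_{SP}}(\sigma_{S})\rightarrow1$ and $F_{g_{S_{J}P}}(\sigma_{S_{J}})\rightarrow1$.

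Next I would prune the subdominant contributions. In each of (\ref{Psi1}), (\ref{Psi2}), (\ref{Psi3}) the second bracketed term carries a prefactor $\sigma_{S}^{a+\tau}$ times an upper incomplete Gamma function, resp.\ an incomplete Meijer $G$-function, whose argument is $\sigma_{S}\lambda_{SP}\rightarrow\infty$; invoking the large-argument behaviour $\Gamma(s,x)\sim x^{s-1}e^{-x}$ \cite{Table} and the analogous exponential tail of the incomplete $G$-function \cite{kilbas}, these terms decay exponentially in $\overline{\gamma}_{I}$ and are negligible beside the first term, which tends to a constant times a $G$-function of the bounded arguments. The same argument disposes of the double Mellin--Barnes integrals $\mathcal{Y}$ and $\mathcal{W}$ in $\Psi_{2},\Psi_{3}$, whose inner integrand contains $\Gamma(1-a-\tau-w,\sigma_{S}\lambda_{SP})$. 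After this step $\Psi_{1}^{(n_{3},a,\tau)}$ reduces to $\varrho_{D}^{\tau}(\Upsilon_{E_{2}}\varrho_{E_{2}})^{a}\upsilon_{SR}^{-(n_{3}+\tau+a+1)}\,\Gamma(n_{3}+\tau+a+1,\gamma_{th}\upsilon_{SR}/\overline{\gamma}_{S})$, while $\Psi_{2},\Psi_{3}$ collapse to $\zeta^{-(n_{2}+n_{3}-p+1+a+\tau)}\varrho_{D}^{\tau}(\Upsilon_{E_{2}}\varrho_{E_{2}})^{a}$ times $\mathcal{M}^{(\cdot)}$, resp.\ the limiting value of $\Phi^{(\cdot)}$.

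Then I would extract the leading order of the surviving Meijer $G$-functions. For $\Phi^{(\cdot)}$ the argument $\chi/(\lambda_{S_{J}P}\overline{\gamma}_{I})\rightarrow0$, so by the origin series of the $G$-function \cite{Table} one keeps only the residue at the pole producing the least power of $\overline{\gamma}_{I}^{-1}$; the same device applies to the incomplete-$G$ coefficients $e^{(D,\cdot)}$ and to $\mathcal{N}_{1}$ appearing in (\ref{IP_final}), and to the competition between $F_{g_{S_{J}P}}(\sigma_{S_{J}})\Xi_{2}$ and $\Xi_{3}$ in (\ref{Dtau_jammer_final}). Collecting the surviving terms, resubstituting through $\Xi_{1},\Xi_{2},\Xi_{3}$ and $\mathcal{D}_{P}^{(\tau)}$ into (\ref{IP_final}), and factoring out the lowest retained power of $\overline{\gamma}_{I}$ yields the announced (\ref{IP_asymptotic}); the exponent of that power, after the Gamma--Gamma parameters of the $\mathbf{R}$--$\mathbf{D}$ and $\mathbf{R}$--$\mathbf{E}_{2}$ links and the shadowed-Rician orders are inserted, is the claimed diversity order.

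The hard part is precisely this last bookkeeping: among the residues of $\Phi^{(\cdot)}$, $\mathcal{M}^{(\cdot)}$, $\mathcal{N}_{1}$ and of the incomplete $G$-functions buried in the $e^{(D,\cdot)}$ coefficients, one must single out the pole that governs the dominant $\overline{\gamma}_{I}^{-1}$ power and verify that the first-hop (wiretapper $\mathbf{E}_{1}$) and second-hop (FSO) contributions do not cancel at that order, so that the diversity order read off from (\ref{IP_asymptotic}) is genuinely attained rather than being the order of a term whose coefficient happens to vanish.
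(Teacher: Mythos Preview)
Your plan does not reproduce the formula that Theorem~\ref{theorem2} actually asserts. The paper's ``asymptotic'' is not a pure leading-order expansion in $\overline{\gamma}_{I}$; it is a very specific partial simplification. In Appendix~C the authors rewrite each of $\mathcal{M}^{(\cdot)}$, $\Phi^{(\cdot)}$, $\mathcal{Y}^{(\cdot)}$, $\mathcal{W}^{(\cdot)}$ as a Mellin--Barnes integral, expand each as a residue series at the left poles $s=-n_{1}-1-k$, and keep only the $k=0$ term of that series. Everything else in the exact expression is left untouched: $F_{g_{SP}}(\sigma_{S})$, $F_{g_{S_{J}P}}(\sigma_{S_{J}})$, $\Xi_{1}^{(n_{3},\tau)}$ and hence $\Psi_{1}$ are carried over verbatim into $\mathcal{D}_{P}^{(\tau,\infty)}$, and the $\mathcal{Y}$, $\mathcal{W}$ contributions are \emph{retained} (in approximated form), not discarded. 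Substituting the four one-term residue approximations back into (\ref{Psi2})--(\ref{Psi3}) is precisely what produces the two-term bracket in $\mathcal{V}^{(n_{2},n_{3},p,a,\tau)}$ of (\ref{V_asymp}): the first term comes from $F_{g_{SP}}(\sigma_{S})\mathcal{M}^{(\cdot)}$ (resp.\ $F_{g_{SP}}(\sigma_{S})\Phi^{(\cdot)}$) and the second term $(\lambda_{SP}\sigma_{S})^{a+\tau}G_{2,2}^{2,1}(\cdot)$ comes from $\lambda_{SP}\sigma_{S}^{a+\tau}\mathcal{Y}^{(\cdot)}$ (resp.\ $\lambda_{SP}\sigma_{S}^{a+\tau}\mathcal{W}^{(\cdot)}$).

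Your programme, by contrast, (i) sends $F_{g_{SP}}(\sigma_{S})\to1$, (ii) throws away the entire $\mathcal{Y}$ and $\mathcal{W}$ blocks on the grounds that $\Gamma(1-a-\tau-w,\sigma_{S}\lambda_{SP})$ is exponentially small, and (iii) also simplifies $\Psi_{1}$. Steps (i)--(iii) may well be legitimate further reductions of the true leading behaviour, but they do \emph{not} lead to (\ref{IP_asymptotic}) with (\ref{V_asymp}): you would obtain only the first summand of the bracket in (\ref{V_asymp}) and with $F_{g_{SP}}(\sigma_{S})$ replaced by $1$, and your $\Xi_{1}$ would differ from the paper's, which is kept exact. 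In short, the gap is that you are aiming at a sharper asymptotic than the one stated; to prove Theorem~\ref{theorem2} as written you must instead apply the residue theorem to all four of $\mathcal{M},\Phi,\mathcal{Y},\mathcal{W}$, truncate each at the dominant pole, and resist the temptation to simplify anything else.
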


\begin{equation}
\Psi _{2}^{\left( 0,n_{2},n_{3},p,a,\tau \right) }= \sigma _{S_{J}}\mathcal{V%
}^{\left( n_{2},n_{3},p,a,\tau \right) },  \label{Psi2_approximate}
\end{equation}

\begin{equation}
\Psi _{3}^{\left( 0,n_{2},n_{3},p,a,\tau \right) }= \frac{\Gamma \left(
	2+n_{1},\sigma _{S_{J}}\lambda _{S_{J}P}\right) }{\lambda _{S_{J}P}}\mathcal{%
	\ V}^{\left( n_{2},n_{3},p,a,\tau \right) },  \label{Psi3_approximate}
\end{equation}
{and $\mathcal{V}^{\left( n_{2},n_{3},p,a,\tau \right)}$ is
	given by (\ref{V_asymp}) as shown at the same aforementioned page.} 
\begin{figure*}[!htb]
	\setcounter{mytempeqncnt}{\value{equation}} \setcounter{equation}{45}
	\par
	\begin{equation}
	\text{IP}_{P}^{\left( \infty \right) }\sim 1-\frac{\xi _{E_{2}}^{2}\mathcal{D%
		}_{P}^{\left( 0,\infty \right) }}{\Gamma \left( \alpha _{E_{2}}\right)
		\Gamma \left( \beta _{E_{2}}\right) }+\theta \left[e^{(D,0)}\mathcal{D}%
	_{P}^{\left( \xi _{D}^{2},\infty \right) }+\sum_{k=0}^{\infty
	}e^{(D,0,k)}\left( \beta _{D},\alpha _{D}\right) \mathcal{D}_{P}^{\left(
		\alpha _{D}+i,\infty \right) }+e^{(D,0,k)}\left( \alpha _{D},\beta
	_{D}\right) \mathcal{D}_{P}^{\left( \beta _{D}+k,\infty \right) }\right].
	\label{IP_asymptotic}
	\end{equation}
	\hrulefill{} \vspace*{4pt}
\end{figure*}
\begin{figure*}[!htb]
	\setcounter{mytempeqncnt}{\value{equation}} \setcounter{equation}{46}
	\par
	\begin{equation}
	\mathcal{D}_{P}^{\left( \tau ,\infty \right) }= \Upsilon _{D}^{\tau }\Delta
	_{SR}\left[ 
	\begin{array}{c}
	\sum_{n_{3}=0}^{m_{SR}-1}\phi _{SR}^{(n_{3})}\Xi _{1}^{\left( n_{3},\tau
		\right) }-\Delta _{SE_{1}}\Delta
	_{S_{J}E_{1}}\sum_{n_{2}=0}^{m_{SE_{1}}-1}\sum_{p=0}^{n_{2}}%
	\sum_{n_{3}=0}^{m_{SR}-1} \\ 
	\times \left\{ F_{g_{S_{J}P}}\left( \sigma _{S_{J}}\right) \Xi
	_{2}^{(0,n_{2},n_{3},p,\tau )}+\Xi _{3}^{(0,n_{2},n_{3},p,\tau )}\right\} 
	\mathcal{B}^{(n_{1},n_{2},n_{3},p)}%
	\end{array}
	\right] {.}
	\end{equation}
	\hrulefill{} \vspace*{4pt}
\end{figure*}
\begin{figure*}[!htb]
	\setcounter{mytempeqncnt}{\value{equation}} \setcounter{equation}{47}
	\par
	\begin{align}
	\Xi _{n}^{(0,n_{2},n_{3},p,\tau )}& = e^{(E_{2},\tau )}\Psi _{n}^{\left(
		0,n_{2},n_{3},p,\xi _{E_{2}}^{2},\tau \right) }+\sum_{k=0}^{\infty
	}e^{(E_{2},\tau ,k)}\left( \beta _{E_{2}},\alpha _{E_{2}}\right) \Psi
	_{n}^{\left( 0,n_{2},n_{3},p,\alpha _{E_{2}}+k,\tau \right) }  \notag \\
	& +e^{(E_{2},\tau ,k)}\left( \alpha _{E_{2}},\beta _{E_{2}}\right) \Psi
	_{n}^{\left( 0,n_{2},n_{3},p,\beta _{E_{2}}+k,\tau \right) },\text{ }n\in
	\{2,3\}{.}
	\end{align}
	\hrulefill{} \vspace*{4pt}
\end{figure*}
\begin{figure*}[!htb]
	\setcounter{mytempeqncnt}{\value{equation}} \setcounter{equation}{48}
	\par
	\begin{equation}
	\mathcal{V}^{\left( n_{2},n_{3},p,a,\tau \right) }=\frac{\chi \Gamma \left(
		2+p\right) }{\overline{\gamma }_{I}}\left[ 
	\begin{array}{c}
	F_{g_{SP}}\left( \sigma _{S}\right) \Gamma \left( n_{2}+n_{3}-p+\tau
	+a,\zeta \epsilon _{I}\sigma _{S}\right) \\ 
	+\left( \lambda _{SP}\sigma _{S}\right) ^{a+\tau }{\small G}%
	_{2,2}^{2,1}\left( \frac{{\small \zeta \epsilon }_{I}}{{\small \lambda }_{SP}%
	}\left\vert 
	\begin{array}{c}
	\left( {\small a+\tau ,\sigma }_{S}{\small \lambda }_{SP}\right) ;\left( 
	{\small 1,0}\right) \\ 
	\left( 0,0\right) ,\left( {\small n}_{2}{\small +n}_{3}{\small -p+\tau +a,0}%
	\right) ;-%
	\end{array}%
	\right. \right)%
	\end{array}%
	\right] .  \label{V_asymp}
	\end{equation}
	\hrulefill{} \vspace*{4pt}
\end{figure*}

\begin{proof}
	The proof is provided in Appendix C. \qedhere
\end{proof}

\begin{remark}
	It is worth mentioning that the expression for the IP in the absence of a
	friendly jammer does not have an asymptotic expression as (\ref%
	{Dtau_no_jammer_final}) is independent of $\overline{\gamma}_{I}$.
\end{remark}

\section{Numerical Results and Discussion}
In this {part}, the derived analytical results are validated through Monte Carlo simulation by generating $10^{6}$ random samples and setting {the} parameters are {summarized} in Table \ref{table1}. The turbulence parameters of the FSO {hops} were generated based on OGS-satellite distance, wavelength, and aperture radius according to \cite[Eqs. (4, 9-10)]{Koushal} and \cite[Eqs. (8)]{Sandalidis}. {The main point of note from Figures 2-7 is that all} closed-form and simulation curves are perfectly {matching} for {numerous system} parameters' values, {showing the high accuracy of our results}. 

\begin{table}[tbp]
	\begin{center}
		\caption{Simulation parameters.}
		\begin{tabular}{c|c|c|c|c|c}
			\hline
			\hline
			\textit{Parameter} &$ b_{X} $&$ m_{X} $& $ \Omega_{X} $ & $\lambda_{QP}$ & $ \alpha_{Z} $  \tabularnewline
			\hline
			\textit{Value}  & 1.4 & 2 & 3 & 0.8 & 6.1096   \tabularnewline
			\hline
			\textit{Parameter} & $ \beta_{Z} $ & $ \xi _{Z} $ & $ \gamma_{th}$(dB) & $ \overline{\gamma}_{I}$(dB) &  $ \overline{\gamma}_{S}$(dB)   \tabularnewline
			\hline
			\textit{Value} & 1.0794 & 1.1227 & 2 & 9 &  60 \tabularnewline
			\hline
			\textit{Parameter} &  $ \overline{\gamma}_{S_{J}}$(dB) & $\mu_{E_{2}}$(dB) & $\mu_{D}$(dB) & $ \eta $ & $ \omega_{D} $   \tabularnewline
			\hline
			\textit{Value} & 10 & 20 & 40 & 0.7 &  0.7  \tabularnewline
			\hline
			\hline
		\end{tabular}
		\label{table1}
	\end{center}
\end{table}

Fig. 2 depicts the IP versus $\overline{\gamma }_{I}$ for
various values of $\Omega _{X}$. It is clearly seen that the greater $\overline{\gamma }_{I}$ {is}, the smaller the IP {is}. This can be justified from (\ref{GammaR}) by the fact that when the {MTIP} at the PU receiver increases, the SU is allowed to use its maximal transmit power, which contributes to the improvement of the {SS}.

Figures 3 and 4 show the IP versus $\overline{\gamma }_{I} $ and $\overline{\gamma}_{S}$, respectively, for various values of $\overline{\gamma}_{S_{J}}$.  {It can be ascertained that} the IP decreases with the increase of $\overline{\gamma }_{I}$, $\overline{\gamma }_{S}$, and $\overline{\gamma }_{S_{J}}$ as explained in Remark 1. Also, it can be noticed that the presence of a friendly jammer improves the SS per the same Remark. However, one can notice that for low values of $\overline{\gamma }_{I}$ and $\overline{\gamma}_{S}$, the friendly jammer does not contribute to the enhancement of the SS. In fact, it can be seen from (\ref{UQ}) that the smaller $\overline{\gamma }_{I}$  and $\overline{\gamma}_{S}$ are, the smaller $U_{S}$ is. Thus, it follows from (\ref{gammaE1_without_jammer}) and (\ref{gammaE1}) that both $ \gamma_{E_{1}} $ and $ \gamma^{(J)}_{E_{1}} $ approach 0. Moreover, it can be observed that above certain thresholds of either $\overline{\gamma }_{I}$ or $\overline{\gamma }_{S}$, respectively, the IP becomes steady as discussed in Remark 1.  

Fig. 5 shows the IP versus $\overline{\gamma }_{S_{J}}$ for
various values of $\Omega _{X}$. As can be seen, the IP decreases with
the increasing values of the $\overline{\gamma }_{S_{J}}$. This can be
justified from (\ref{gammaE1}) {as} increasing $\overline{\gamma 
}_{S_{J}}$ decreases the SNR at the eavesdropper which reduces the wiretap link {capacity}. Consequently, the secrecy capacity gets enhanced which
results in an improvement of the {SS}. 

Fig. 6 illustrates the IP as a function of $\mu _{D} $ in the presence and absence of a friendly jammer for various values of $\Omega_{X}$. {The greater $\mu_{D} $ is, the greater the legitimate end-user SNR is, leading to the improvement of the SS.} This behavior can be {interpreted as} increasing $\mu _{D} $ leads to the enhancement of the SNR at the destination which improves the legitimate link capacity accordingly.

Figures 7 and 8 depict the IP as a function of the average power of the LOS and multipath components in the presence and absence of a friendly jammer. These powers are assumed to be identical for all channels i.e., $ \Omega_{SR}=\Omega_{SE_{1}}=\Omega_{S_{J}E_{1}}$, and $ b_{SR}=b_{SE_{1}}=b_{S_{J}E_{1}} $. One can ascertain that increasing the average powers of the LOS and multi-path components at the first hop result in an enhancement of the {SS}. Moreover, it is clearly seen that the presence of a friendly jammer is strengthening the SS. For instance, one can see that for $\Omega_{X}=6$ and  $b_{X}=6$, {IP equals $0.35$ and $0.61$} in the presence and absence of a friendly jammer, respectively.
\begin{table*}[tbp]
	\begin{tabular}{p{7cm}p{10cm}}
		\hspace*{-.54cm}\includegraphics[scale=0.45]{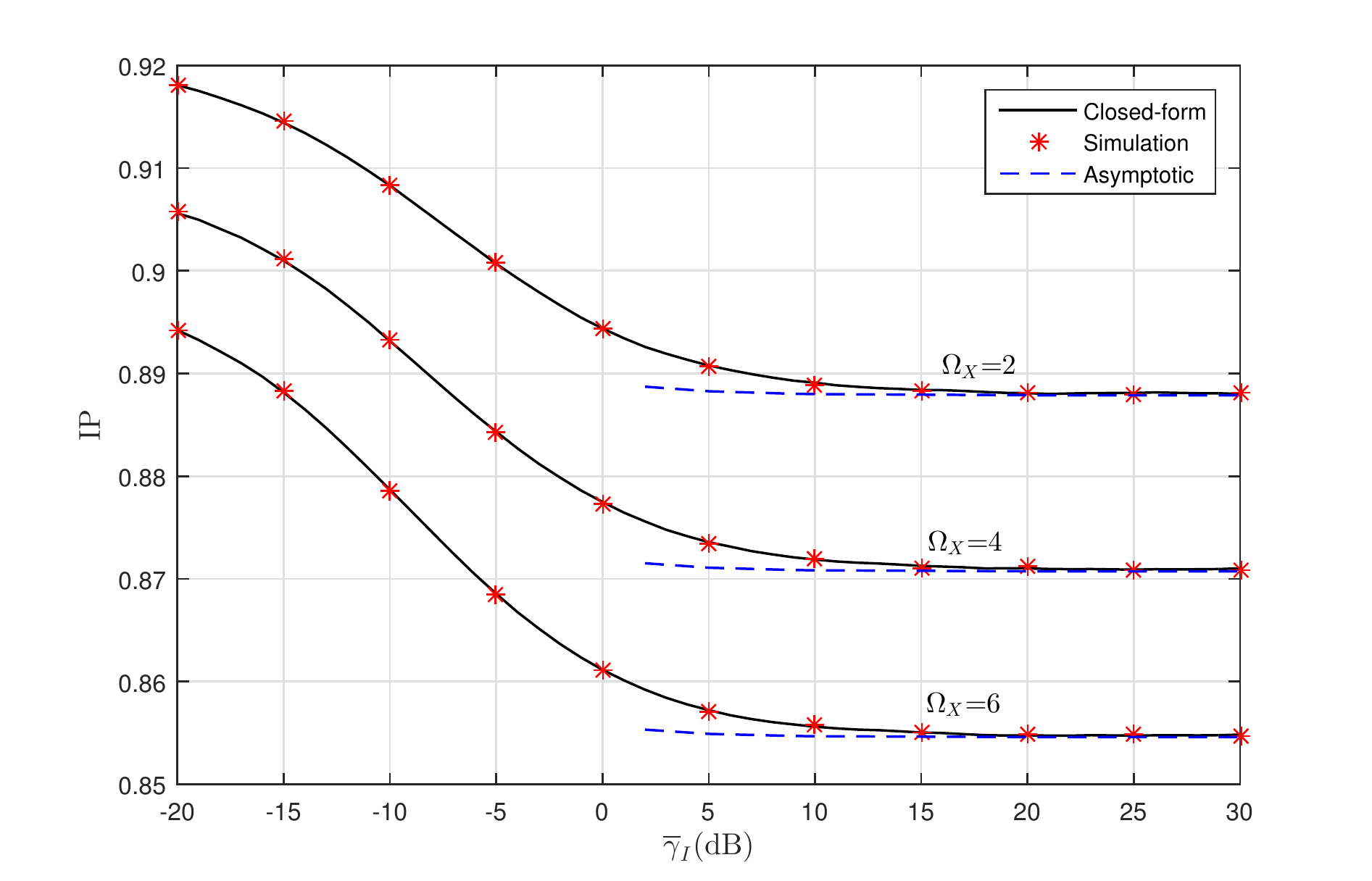} & \hspace*{2cm}\includegraphics[scale=0.45]{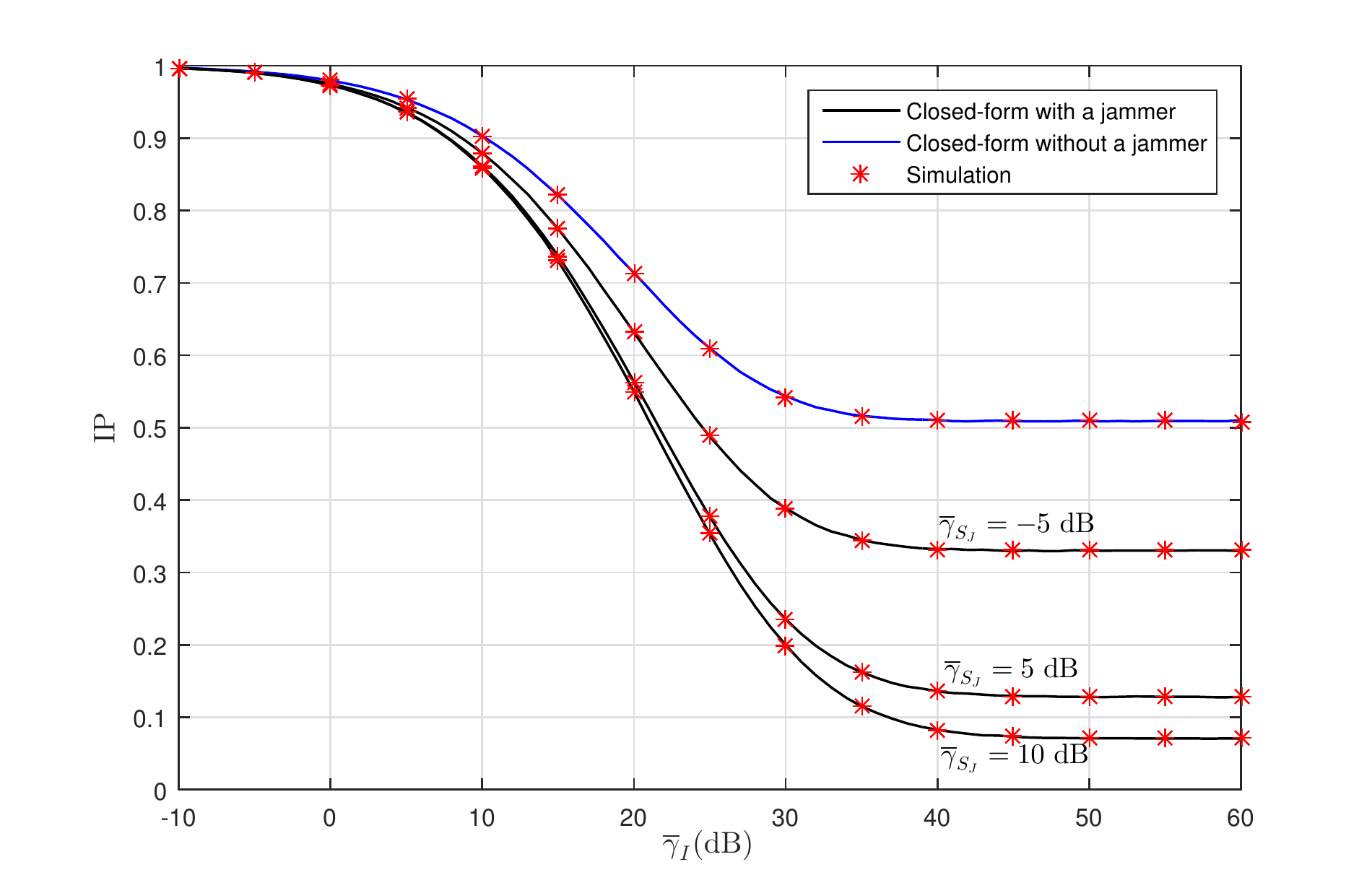}  \\
		{\footnotesize Fig. 2. IP vs $\overline{\protect\gamma }_{I}$ in the presence of a
			friendly jammer for different values of $\Omega_{X}$, $\protect\rho %
			_{D}=0.001 $, $\protect\rho_{E_{2}}=0.01$, $\protect\sigma _{S}=\protect%
			\sigma_{S_{J}}=1,$ $\protect\epsilon _{I}=0.1,$ and $b_{X}=4$.} & \hspace*{4cm}{\footnotesize Fig. 3.  IP vs $\overline{\protect\gamma }_{I}$ for different values of $\overline{\protect\gamma }_{S_{J}}$.} \\
		\hspace*{-.54cm}\includegraphics[scale=0.45]{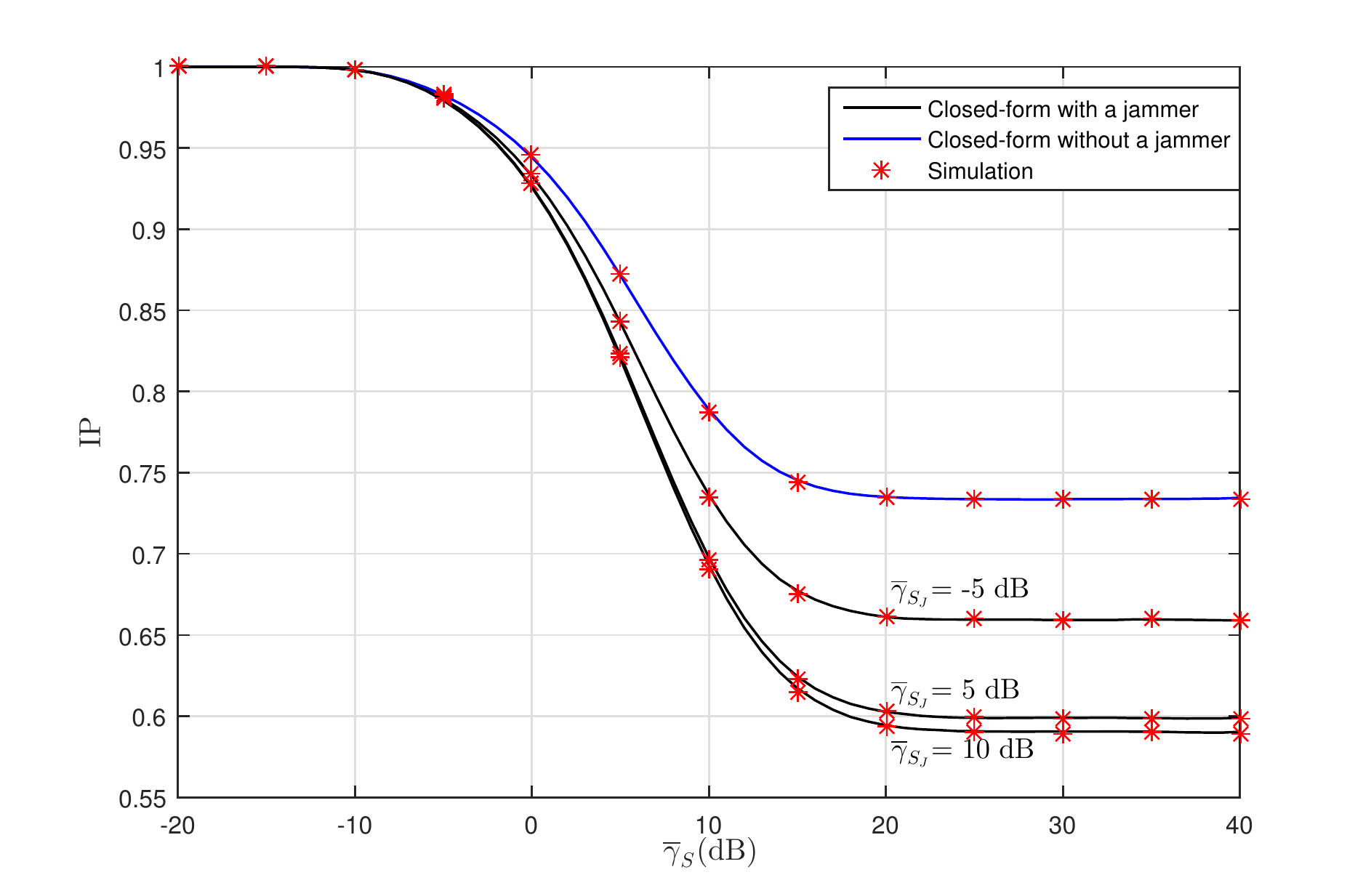} & \hspace*{2cm}\includegraphics[scale=0.45]{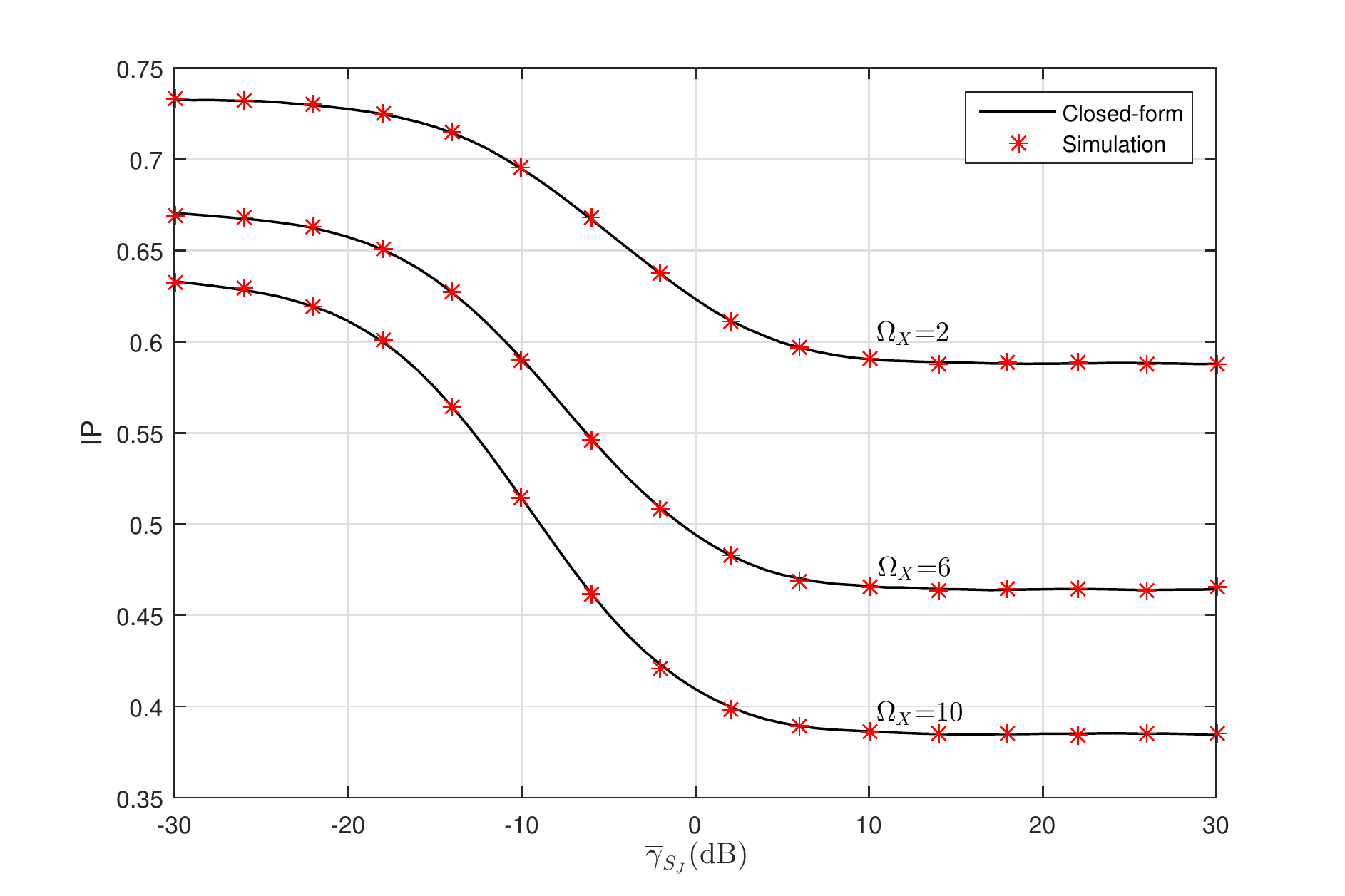}  \\
		\hspace*{1.5cm}{\footnotesize Fig. 4. IP vs $\overline{\protect\gamma}_{S}$ for different values of $\overline{\protect\gamma }_{S_{J}}$.} & \hspace*{3.75cm}{\footnotesize Fig. 5. IP vs $\overline{\protect\gamma }_{S_{J}}$ for different values of $\Omega _{X}$.} \\
		\hspace*{-.54cm}\includegraphics[scale=0.45]{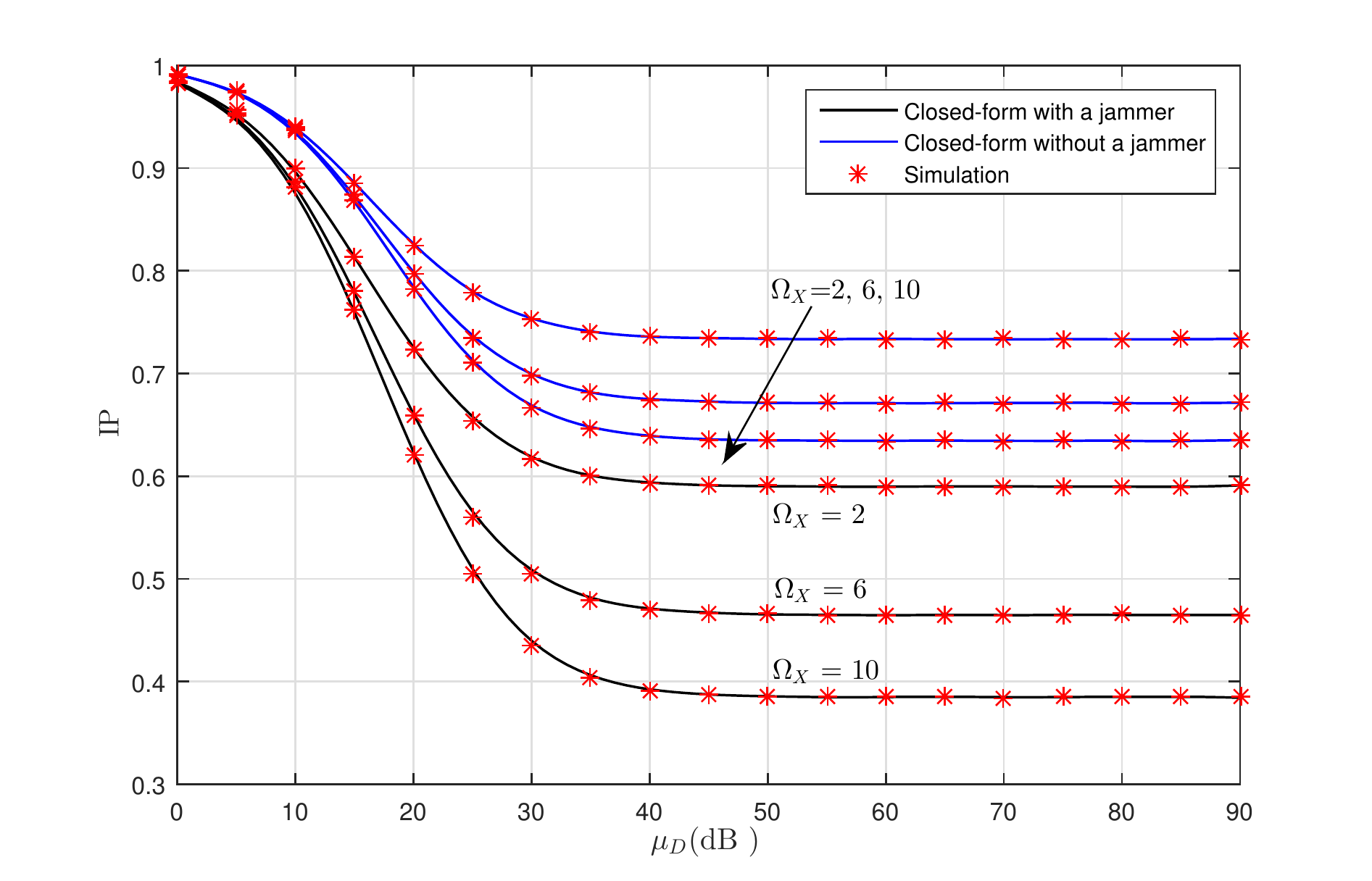} & \hspace*{2cm}\includegraphics[scale=0.5]{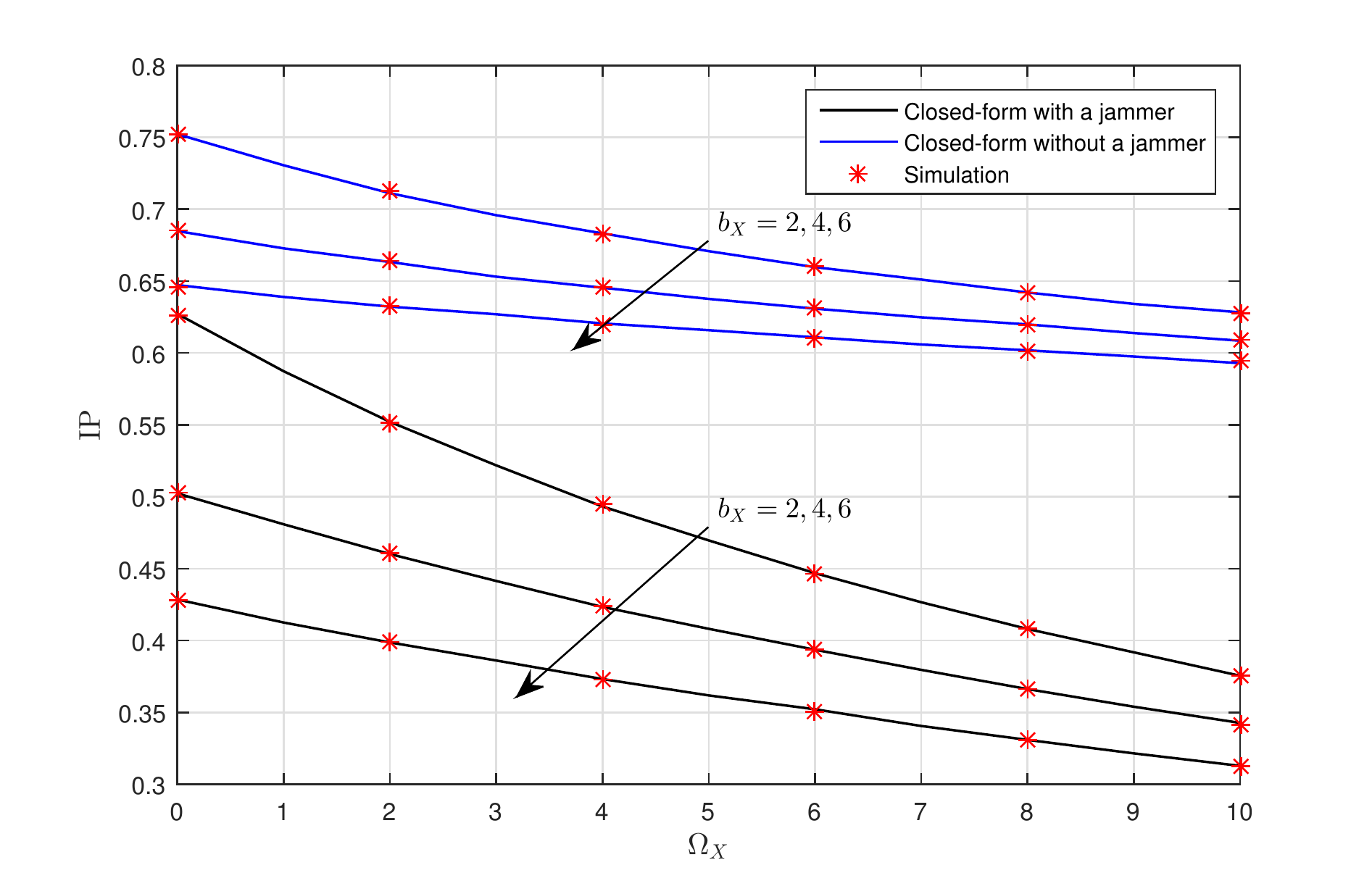}  \\
		\hspace*{1.5cm}{\footnotesize Fig. 6. IP vs $\protect\mu _{D}$ for different values of $\Omega _{X}$.} & \hspace*{3.75cm}{\footnotesize Fig. 7.  IP vs $\Omega _{X}$ for different values of $b_{X}$.}
	\end{tabular}
\end{table*}
\begin{figure}[tbp]
	\begin{centering}
		\includegraphics[scale=0.45]{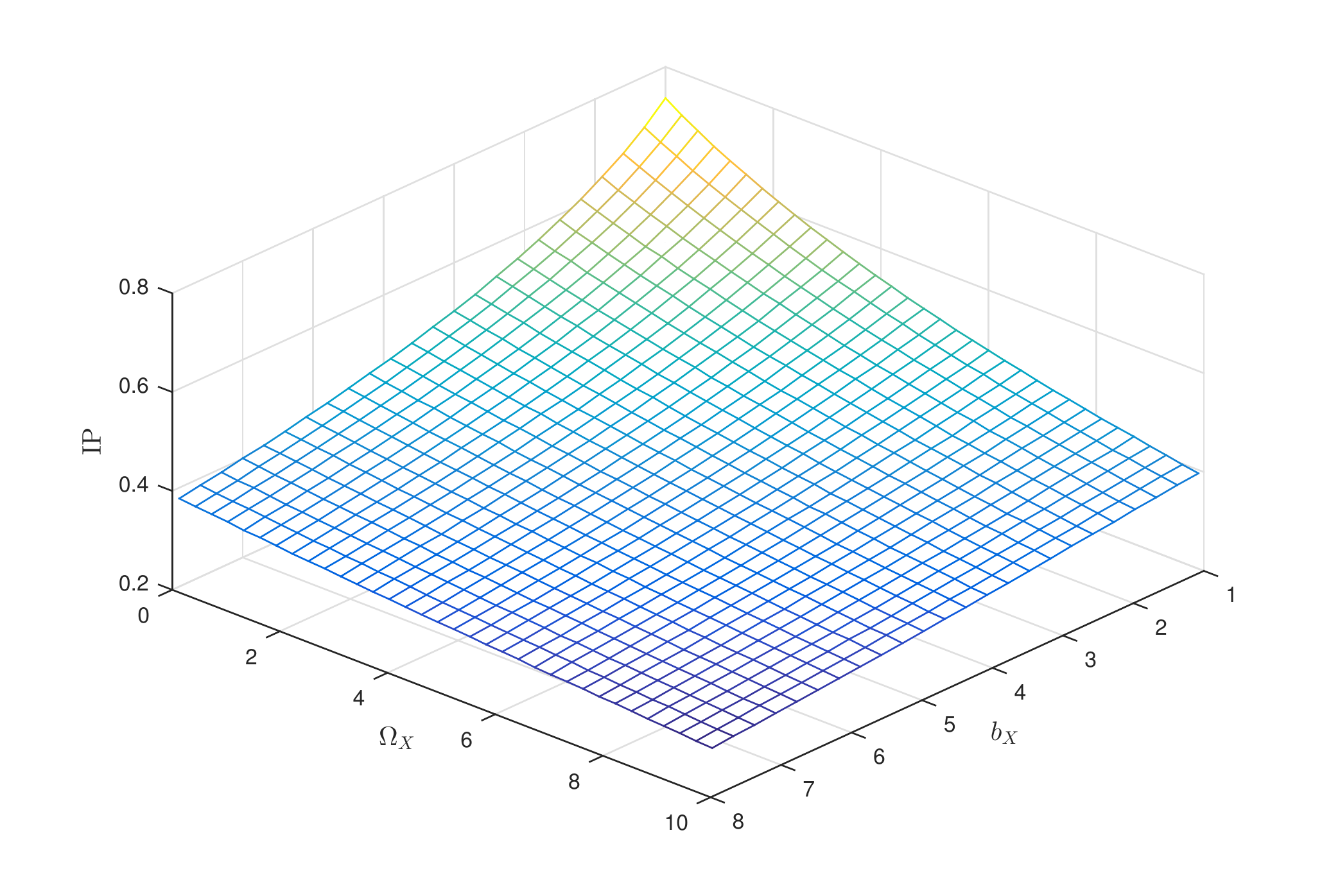}
	\end{centering}
	\centering{}
	\caption*{{\footnotesize Fig. 8. IP vs $\Omega_{X}$ and $b_{X}$ in the presence of a friendly jammer.}}
\end{figure}
\section{Conclusion}

In this paper, the physical layer security of a hybrid satellite-terrestrial
cognitive network was investigated. Specifically, the main aim of this work
was to derive a new formula for the IP of a DF relaying system that takes
into consideration the presence of two eavesdroppers (i.e. at the first and
second hop). This formula was then used to investigate the secrecy
performance of our system by deriving closed-form and asymptotic
expressions for the IP in the presence and absence of a friendly jammer. By
considering different key parameters of the network, our results
demonstrated that the best secrecy can be achieved by increasing the maximum
tolerated interference power at the PU receiver and the sources transmit
power. Interestingly, we showed that friendly jammer does not contribute
to the enhancement of the {SS} when the transmit power of the source or the
maximum tolerated interference power at the PU is below certain thresholds.

\section*{Appendix A: proof of Lemma 1}
Using {equations (\ref{IP1})-(\ref{c2s2xp2})}, the IP can be %
{rewritten} as 
\begin{align}
\text{IP}& \overset{\left( a\right) }{=}\Pr \left( \left. C_{\sec }\leq
0\right\vert \gamma _{R}>\gamma _{th}\right) \Pr \left( \gamma _{R}>\gamma
_{th}\right)  \label{IP2} \\
& +\Pr \left( \left. C_{\sec }\leq 0\right\vert \gamma _{R}<\gamma
_{th}\right) \Pr \left( \gamma _{R}<\gamma _{th}\right)  \notag \\
& \overset{(b)}{=}\Pr \left( \left. C_{\sec }\leq 0\right\vert \gamma
_{R}>\gamma _{th}\right) \Pr \left( \gamma _{R}>\gamma _{th}\right)  \notag
\\
& +\Pr \left( \gamma _{R}<\gamma _{th}\right) ,  \notag
\end{align}
where step (a) is attained by using both total probability and Bayes' rules,
while step (b) holds by {noting} that when $\gamma
_{R}<\gamma _{th} $, the satellite $\mathbf{R}$ fails to decode the received
message from $\mathbf{S}$ and {therefore} the communication
could not be established between the source and the destination (i.e. $%
\gamma _{D}=\gamma_{E_{2}}=0$). 
{It follows that $C_{\sec
	}=0$} {and consequently,} $\Pr \left(\left. C_{\sec }\leq
0\right\vert \gamma _{R}<\gamma _{th}\right)=1$.

On the other hand, by considering that 
\begin{align}
\Pr \left( \left. C_{\sec }\leq 0\right\vert \gamma _{R}>\gamma _{th}\right)
& =\Pr \left( \gamma _{R}>\gamma _{th}\right)  \label{Prob} \\
& -\Pr \left( \left. C_{\sec }>0\right\vert \gamma _{R}>\gamma _{th}\right) ,
\notag
\end{align}
{Substituting (\ref{Cs1}) and (\ref{Prob}) into (\ref{IP2}), the IP can be
	rewritten as } 
\begin{eqnarray}
\text{IP} &=&1-\Pr \left( C_{\sec }>0,\gamma _{R}>\gamma _{th}\right) ,
\label{IP3} \\
&=&1-\underset{\mathcal{I}}{\underbrace{\Pr \left( \gamma _{R}>\gamma
		_{E_{1}},\gamma _{R}>\gamma _{E_{2}},\gamma _{D}>\gamma _{E_{2}},\gamma
		_{R}>\gamma _{th}\right) }}.  \notag
\end{eqnarray}
Obviously, the term $\mathcal{I}$ can be expressed as the sum of six
different probabilities {(i.e. $\mathcal{I=}\sum_{i=1}^{6}\mathcal{I}_{i}$
	with $\mathcal{I}_{i}=\Pr \left( E_{i},{\gamma _{D}>\gamma_{E_{2}}}\right) $}
where the events $E_{i}$ are summarized in Table \ref{events}. 
\begin{table}[tbp]
	\caption{Events for $\mathcal{I}$.}
	\begin{center}
		\begin{tabular}{c|c||c|c}
			\hline
			&\textit{Event} &  & \textit{Event} 	\tabularnewline
			\hline
			$ E_{1} $ & $\gamma _{R}>\gamma _{E_{1}}>\gamma _{E_{2}}>\gamma _{th}$ & $ E_{4} $ & $\gamma _{R}>\gamma _{th}>\gamma _{E_{1}}>\gamma _{E_{2}}$    \tabularnewline
			\hline
			$ E_{2} $ & $\gamma _{R}>\gamma _{E_{2}}>\gamma _{E_{1}}>\gamma _{th}$ & $ E_{5} $ & $\gamma _{R}>\gamma _{E_{2}}>\gamma _{th}>\gamma _{E_{1}}$     \tabularnewline
			\hline
			$ E_{3} $ & $\gamma _{R}>\gamma _{th}>\gamma _{E_{2}}>\gamma _{E_{1}}$ & $ E_{6} $ & $\gamma _{R}>\gamma _{E_{1}}>\gamma _{th}>\gamma _{E_{2}}$    \tabularnewline
			\hline
		\end{tabular}
	\end{center}
	\label{events}
\end{table}
\begin{figure*}[!htb]
	\setcounter{mytempeqncnt}{\value{equation}} \setcounter{equation}{47}
	\par
	\begin{equation}
	\mathcal{I}_{1} =\int_{0}^{\infty }f_{g_{SP}}\left( u\right) du\int_{\gamma _{th}}^{\infty
	}F_{\gamma _{E_{1}}|g_{SP}=u}\left( y\right) \left[ 
	\begin{array}{c}
	f_{\gamma _{R}|g_{SP}=u}\left( y\right) \int_{\gamma _{th}}^{y}f_{\gamma
		_{E_{2}}}\left( z\right) F_{\gamma _{D}}^{c}\left( z\right) dz \\ 
	-F_{\gamma _{R}|g_{SP}=u}^{c}\left( y\right) f_{\gamma _{E_{2}}}\left(
	y\right) F_{\gamma _{D}}^{c}\left( y\right)%
	\end{array}%
	\right] dy.  \label{I1}
	\end{equation}%
\end{figure*}
\begin{figure*}[!htb]
	\setcounter{mytempeqncnt}{\value{equation}} \setcounter{equation}{48}
	\par
	\begin{equation}
	\mathcal{I}_{2} =\int_{0}^{\infty }f_{g_{SP}}\left( u\right) du\int_{\gamma _{th}}^{\infty
	}F_{\gamma _{R}|g_{SP}=u}^{c}\left( z\right) f_{\gamma _{E_{2}}}\left(
	z\right) F_{\gamma _{D}}^{c}\left( z\right) \left[ F_{\gamma
		_{E_{1}}|g_{SP}=u}\left( z\right) -F_{\gamma _{E_{1}}|g_{SP}=u}\left( \gamma
	_{th}\right) \right] dz.  \label{I2}
	\end{equation}
\end{figure*}
\begin{figure*}[!htb]
	\setcounter{mytempeqncnt}{\value{equation}} \setcounter{equation}{49}
	\par
	\begin{equation}
	\mathcal{I}_{3} =\int_{0}^{\infty }F_{\gamma _{R}|g_{SP}=u}^{c}\left( \gamma _{th}\right)
	f_{g_{SP}}\left( u\right) \int_{0}^{\gamma _{th}}F_{\gamma
		_{E_{1}}|g_{SP}=u}\left( y\right) f_{\gamma _{E_{2}}}\left( y\right)
	F_{\gamma _{D}}^{c}\left( y\right) dudy.  \label{I3}
	\end{equation}%
\end{figure*}
\begin{figure*}[!htb]
	\setcounter{mytempeqncnt}{\value{equation}} \setcounter{equation}{50}
	\par
	\begin{equation}
	\mathcal{I}_{4} =\int_{0}^{\infty }F_{\gamma _{R}|g_{SP}=u}^{c}\left( \gamma _{th}\right)
	f_{g_{SP}}\left( u\right) \int_{0}^{\gamma _{th}}f_{\gamma _{E_{2}}}\left(
	z\right) F_{\gamma _{D}}^{c}\left( z\right) \left[ F_{\gamma
		_{E_{1}}|g_{SP}=u}\left( \gamma _{th}\right) -F_{\gamma
		_{E_{1}}|g_{SP}=u}\left( z\right) \right] dudz.   \label{I4}
	\end{equation}
\end{figure*}
\begin{figure*}[!htb]
	\setcounter{mytempeqncnt}{\value{equation}} \setcounter{equation}{51}
	\par
	\begin{equation}
	\mathcal{I}_{5} =\int_{0}^{\infty }F_{\gamma _{E_{1}}|g_{SP}=u}\left( \gamma _{th}\right)
	f_{g_{SP}}\left( u\right) \left[ \int_{\gamma _{th}}^{\infty }f_{\gamma
		_{E_{2}}}\left( x\right) F_{\gamma _{D}}^{c}\left( x\right) -\int_{\gamma
		_{th}}^{\infty }F_{\gamma _{R}|g_{SP}=u}\left( x\right) f_{\gamma
		_{E_{2}}}\left( x\right) F_{\gamma _{D}}^{c}\left( x\right) \right] dudx. \label{I5}
	\end{equation}%
\end{figure*}
\begin{figure*}[!htb]
	\setcounter{mytempeqncnt}{\value{equation}} \setcounter{equation}{52}
	\par
	\begin{equation}
	\mathcal{I}_{6} =\int_{0}^{\infty }f_{g_{SP}}\left( u\right) \int_{\gamma _{th}}^{\infty
	}f_{\gamma _{R}|g_{SP}=u}\left( x\right) \left[ F_{\gamma
		_{E_{1}}|g_{SP}=u}\left( x\right) -F_{\gamma _{E_{1}}|g_{SP}=u}\left( \gamma
	_{th}\right) \right] dxdu\times \int_{0}^{\gamma _{th}}f_{\gamma
		_{E_{2}}}\left( z\right) F_{\gamma _{D}}^{c}\left( z\right) dz.  \label{I6}
	\end{equation}%
\end{figure*}
Eqs. (\ref{I1}), (\ref{I3}), and (\ref{I5}) are obtained using integration
by parts alongside some algebraic manipulations, while (\ref{I2}), (\ref{I4}%
), and (\ref{I6}) can be achieved {by applying the basic definition of the
	CDF.}

By performing a summation of (\ref{I1})-(\ref{I6}) and substituting $%
\mathcal{I}$ into (\ref{IP3}), (\ref{IP_new_expression}) is attained, which
concludes the proof of {\textbf{Lemma \ref{lemma1}}}.

\section*{Appendix B: proof of theorem 1}

To prove the IP expression given in (\ref{IP1}) in both the absence and
presence of friendly jammer {cases}, it is mandatory to compute $\mathcal{J%
}_{1}\left( y,u\right) $, $\mathcal{J}_{2}\left( y\right) $, and $\mathcal{R}%
_{1}\left( u\right) $. As one can see from (\ref{J1}) and (\ref{J2}), to
compute $\mathcal{J}_{1}\left(y,u\right)$ it is sufficient to derive the
conditional CDFs of (\ref{GammaR}), (\ref{gammaE1_without_jammer}) and (\ref%
{gammaE1}), while $\mathcal{J}_{2}\left(y\right)$ can be attained using (\ref%
{pdf_hop1}) and (\ref{cdf_hop1}).

To do so, we start by computing the conditional CDF of $\gamma_{R}$ for a
given $g_{SP}$ as follows 
\begin{align}
F_{\gamma _{R}|g_{SP}=u}\left( y\right) & =\Pr \left( \gamma _{R}\leq
y\left\vert g_{SP}=u\right. \right)  \label{F_gammaRk_exp1} \\
& \overset{\left( a\right) }{=}F_{g_{SR}}\left( \frac{y}{\Omega \left(
	u\right) }\right) ,  \notag
\end{align}%
where $\Omega \left( u\right) =\overline{\gamma }_{S}$ if $u\leq \sigma _{S}$
and $\Omega \left( u\right) =\frac{\overline{\gamma }_{I}}{u}$ if $u>\sigma
_{S}$, with $\sigma _{S}$ is defined in \textbf{Theorem \ref{theorem1}}, and 
{Step $\left( a\right) $ holds by using (\ref{GammaR}).}

\begin{itemize}
	\item \textbf{Absence of friendly jammer {case}.}
\end{itemize}

Using {(}\ref{gammaE1_without_jammer}{)}, the conditional CDF of $\gamma
_{E_{1}}$ for a given $g_{SP}$ can be expressed as \hspace*{-0.35cm} 
\begin{align}
F_{\gamma _{E_{1}}|g_{SP}=u}\left( y\right) & =\Pr \left( \min \left( 
\overline{\gamma }_{S},\frac{\overline{\gamma }_{I}}{u}\right)
g_{SE_{1}}\leq y\right)  \label{F_gammaE1_nojammer} \\
& =F_{g_{SE_{1}}}\left( \frac{y}{\Omega \left( u\right) }\right) .  \notag
\end{align}

Substituting (\ref{F_gammaRk_exp1}) and (\ref{F_gammaE1_nojammer}) into (\ref%
{J1}), yields 
\begin{equation}
\mathcal{J}_{1}\left( y,u\right) =\frac{1}{\Omega \left( u\right) }%
f_{g_{SR}}\left( \frac{y}{\Omega \left( u\right) }\right)
F_{g_{SE_{1}}}\left( \frac{y}{\Omega \left( u\right) }\right) {.}
\end{equation}%
{Now,} substituting (\ref{pdf_hop1}) into (\ref{cdf_hop1}), the term $%
\mathcal{J}_{1}\left( y,u\right) $ can be rewritten as 
\begin{align}
\mathcal{J}_{1}\left( y,u\right) & =\frac{\Delta _{SE_{1}}\Delta _{SR}e^{-%
		\frac{\upsilon _{SR}}{\Omega \left( u\right) }y}}{\Omega \left( u\right) }%
\sum_{n_{2}=0}^{m_{SE_{1}}-1}\frac{\phi _{SE_{1}}^{(n_{2})}}{\upsilon
	_{SE_{1}}^{n_{2}+1}}{}  \label{J1_nojammer} \\
& \times \gamma _{inc}\left( {\small n}_{2}{\small +1},\frac{\upsilon
	_{SE_{1}}}{\Omega \left( u\right) }y\right) \sum_{n_{3}=0}^{m_{SR}-1}\frac{%
	\phi _{SR}^{(n_{3})}}{\Omega ^{n_{3}}\left( u\right) }y^{n_{3}}.  \notag
\end{align}%
Next, the term $\mathcal{J}_{2}\left( y\right) $ given in (\ref{J2}) can be
expressed as

\begin{equation}
\mathcal{J}_{2}\left( y\right) =F_{\gamma _{E_{2}}}\left( y\right)
-\int_{z=0}^{y}f_{\gamma _{E_{2}}}\left( z\right) F_{\gamma _{D}}\left(
z\right) dz.  \label{J2_exp2}
\end{equation}%
{By substituting (\ref{pdfx}) and (\ref{cdfx}) into (\ref{J2_exp2}), we get} 
\begin{align}
\mathcal{J}_{2}\left( y\right) & =F_{\gamma _{E_{2}}}\left( y\right) -\theta
\notag \\
& \times \int_{z=0}^{y}\frac{1}{z}G_{1,3}^{3,0}\left( \frac{\Upsilon
	_{E_{2}}z}{\mu _{1}^{(E_{2})}}\left\vert 
\begin{array}{c}
-;\xi _{E_{2}}^{2}+1 \\ 
\xi _{E_{2}}^{2},\alpha _{E_{2}},\beta _{E_{2}};-%
\end{array}%
\right. \right)  \notag \\
& \times G_{2,4}^{3,1}\left( \frac{\Upsilon _{D}z}{\mu _{1}^{(D)}}\left\vert 
\begin{array}{c}
1;\xi _{D}^{2}+1 \\ 
\xi _{D}^{2},\alpha _{D},\beta _{D};0%
\end{array}%
\right. \right) dz,  \label{J2_exp3}
\end{align}%
where $\theta $ is defined in {\textbf{Theorem \ref{theorem1}}}.

To compute (\ref{J2_exp3}), we can express one of the two Meijer's
G-functions as an infinite sum of the respective integrand's residues
evaluated at the appropriate poles \cite[Theorem 1.5]{kilbas}. That is, 
\begin{align}
G_{2,4}^{3,1}\left( y\left\vert 
\begin{array}{c}
1-\tau ;\xi _{Z}^{2}+1 \\ 
\xi _{Z}^{2},\alpha _{Z},\beta _{Z};-\tau%
\end{array}%
\right. \right) & =e^{(Z,\tau )}y^{\xi _{Z}^{2}}  \label{residd} \\
& +\sum_{k=0}^{\infty }e^{(Z,\tau ,k)}\left( \beta _{Z},\alpha _{Z}\right)
y^{\alpha _{Z}+k}  \notag \\
& +\sum_{k=0}^{\infty }e^{(Z,\tau ,k)}\left( \alpha _{Z},\beta _{Z}\right)
y^{\beta _{Z}+k},  \notag
\end{align}%
where $\tau =\{0,q\},$ $q$ {takes a value in the set} $\{\xi _{D}^{2},\alpha
_{D}+k,\beta _{D}+k\},$ $e^{(Z,\tau )},$ and $e^{(Z,\tau ,k)}\left(
.,.\right) $ are defined in \textbf{Theorem \ref{theorem1}}.

Substituting $\left( \ref{residd}\right) $ into (\ref{J2_exp3}), {yields} 
\begin{align}
\mathcal{J}_{2}\left( y\right) & =F_{\gamma _{E_{2}}^{(k)}}\left( y\right)
-\theta  \label{J2_final} \\
& \times \left[ 
\begin{array}{c}
e^{(D,0)}\mathcal{P}\left( \xi _{D}^{2},y\right) \\ 
+\sum_{k=0}^{\infty }e^{(D,0,k)}\left( \beta _{D},\alpha _{D}\right) 
\mathcal{P}\left( \alpha _{D}+k,y\right) \\ 
+\sum_{k=0}^{\infty }e^{(D,0,k)}\left( \alpha _{D},\beta _{D}\right) 
\mathcal{P}\left( \beta _{D}+k,y\right)%
\end{array}%
\right] ,  \notag
\end{align}%
with 
\begin{align}
\mathcal{P}\left( q,y\right) & =\left( \frac{\Upsilon _{D}}{\mu _{1}^{(D)}}%
\right) ^{q}\int_{0}^{y}z^{q-1}  \label{Pq} \\
& \times G_{1,3}^{3,0}\left( \frac{\Upsilon _{E_{2}}z}{\mu _{1}^{(E_{2})}}%
\left\vert 
\begin{array}{c}
-;\xi _{E_{2}}^{2}+1 \\ 
\xi _{E_{2}}^{2},\alpha _{E_{2}},\beta _{E_{2}};-%
\end{array}%
\right. \right) dz  \notag \\
& \overset{\left( a\right) }{=}\left( \frac{\Upsilon _{D}y}{\mu _{1}^{(D)}}%
\right) ^{q}G_{2,4}^{3,1}\left( \frac{\Upsilon _{E_{2}}y}{\mu _{1}^{(E_{2})}}%
\left\vert 
\begin{array}{c}
1-q;\xi _{E_{2}}^{2}+1 \\ 
\xi _{E_{2}}^{2},\alpha _{E_{2}},\beta _{E_{2}};-q%
\end{array}%
\right. \right) ,  \notag
\end{align}%
where step $\left( a\right) $ follows using \cite[Eq. 07.34.21.0003.01]%
{Wolfram}.

Replacing, (\ref{J1_nojammer}) and (\ref{J2_final}) into (\ref{U1}), one can
obtain

\begin{align}
\mathcal{R}_{1}\left( u\right) & =\frac{\xi _{_{E_{2}}}^{2}\mathcal{L}%
	_{1}^{\left( 0\right) }\left( u\right) }{\Gamma \left( \alpha
	_{E_{2}}\right) \Gamma \left( \beta _{E_{2}}\right) }
\label{U1_nojammer_exp1} \\
& -\theta \left[ 
\begin{array}{c}
e^{(D,0)}\mathcal{L}_{1}^{\left( \xi _{D}^{2}\right) }\left( u\right) \\ 
+\sum_{k=0}^{\infty }e^{(D,0,k)}\left( \beta _{D},\alpha _{D}\right) 
\mathcal{L}_{1}^{\left( \alpha _{D}+k\right) }\left( u\right) \\ 
+\sum_{k=0}^{\infty }e^{(D,0,k)}\left( \alpha _{D},\beta _{D}\right) 
\mathcal{L}_{1}^{\left( \beta _{D}+k\right) }\left( u\right)%
\end{array}%
\right] ,  \notag
\end{align}%
where 
\begin{align}
\mathcal{L}_{1}^{\left( \tau \right) }\left( u\right) & =\Delta
_{SE_{1}}\Delta _{SR}\left( \frac{\Upsilon _{D}}{\mu _{1}^{(D)}}\right)
^{\tau }  \label{L1_exp1} \\
& \times \sum_{n_{2}=0}^{m_{SE_{1}}-1}\frac{\phi _{SE_{1}}^{(n_{2})}}{%
	\upsilon _{SE_{1}}^{n_{2}+1}}\sum_{n_{3}=0}^{m_{SR}-1}\frac{\phi
	_{SR}^{(n_{3})}}{\Omega ^{n_{3}+1}\left( u\right) }{}  \notag \\
& \times \int_{y=\gamma _{th}}^{\infty }y^{\tau +n_{3}}\gamma _{inc}\left( 
{\small n}_{2}{\small +1},\frac{\upsilon _{SE_{1}}}{\Omega \left( u\right) }%
y\right) e^{-\frac{\upsilon _{SR}}{\Omega \left( u\right) }y}  \notag \\
& \times G_{2,4}^{3,1}\left( \frac{\Upsilon _{E_{2}}y}{\mu _{1}^{(E_{2})}}%
\left\vert 
\begin{array}{c}
1-\tau ;\xi _{E_{2}}^{2}+1 \\ 
\xi _{E_{2}}^{2},\alpha _{E_{2}},\beta _{E_{2}};-\tau%
\end{array}%
\right. \right) dy.  \notag
\end{align}%
{Now,} replacing (\ref{residd}) into (\ref{L1_exp1}), yields 
\begin{align}
\mathcal{L}_{1}^{\left( \tau \right) }\left( u\right) & =\Delta
_{SE_{1}}\Delta _{SR}\left( \frac{\Upsilon _{D}}{\mu _{1}^{(D)}}\right)
^{\tau }  \label{L1_exp2} \\
& \times \sum_{n_{2}=0}^{m_{SE_{1}}-1}\frac{\phi _{SE_{1}}^{(n_{2})}}{%
	\upsilon _{SE_{1}}^{n_{2}+1}}\sum_{n_{3}=0}^{m_{SR}-1}\frac{\phi
	_{SR}^{(n_{3})}}{\Omega ^{n_{3}+1}\left( u\right) }{}  \notag \\
& \times \left[ 
\begin{array}{c}
e^{(E_{2},\tau )}\digamma _{1}^{\left( n_{2},n_{3},\xi _{E_{2}}^{2},\tau
	\right) }\left( u\right) \\ 
+\sum_{k=0}^{\infty }e^{(E_{2},\tau ,k)}\left( \beta _{E_{2}},\alpha
_{E_{2}}\right) \digamma _{1}^{\left( n_{2},n_{3},\alpha _{E_{2}}+k,\tau
	\right) }\left( u\right) \\ 
+\sum_{k=0}^{\infty }e^{(E_{2},\tau ,k)}\left( \alpha _{E_{2}},\beta
_{E_{2}}\right) \digamma _{1}^{\left( n_{2},n_{3},\beta _{E_{2}}+k,\tau
	\right) }\left( u\right)%
\end{array}%
\right] ,  \notag
\end{align}%
with 
\begin{align}
\digamma _{1}^{\left( n_{2},n_{3},a,\tau \right) }\left( u\right) & =\left( 
\frac{\Upsilon _{E_{2}}}{\mu _{1}^{(E_{2})}}\right) ^{a}\int_{y=\gamma
	_{th}}^{\infty }y^{\tau +n_{3}+a}  \label{F1_exp1} \\
& \times e^{-\frac{\upsilon _{SR}}{\Omega \left( u\right) }y}\gamma
_{inc}\left( {\small n}_{2}{\small +1},\frac{\upsilon _{SE_{1}}}{\Omega
	\left( u\right) }y\right) dy,  \notag
\end{align}%
where $a$ {belongs to the set} $\{\xi _{E_{2}}^{2},\alpha _{E_{2}}+k,\beta
_{E_{2}}+k\}.$

Using \cite[Eq. (06.06.26.0004.01)]{Wolfram}, (\ref{F1_exp1}) can be
expressed as 
\begin{align}
\digamma _{1}^{\left( n_{2},n_{3},a,\tau \right) }\left( u\right) & =\left( 
\frac{\Upsilon _{E_{2}}}{\mu _{1}^{(E_{2})}}\right) ^{a}\left( \frac{\Omega
	\left( u\right) }{\upsilon _{SR}}\right) ^{l+1}  \label{F1_exp2} \\
& \times G_{2,2}^{1,2}\left( \frac{\upsilon _{SE_{1}}}{\upsilon _{SR}}%
\left\vert 
\begin{array}{c}
\left( 1,0\right) ,\left( -l,\varsigma \right) ;- \\ 
\left( {\small n}_{2}{\small +1,0}\right) ;\left( 0,0\right)%
\end{array}%
\right. \right) ,  \notag
\end{align}%
with $l=n_{3}+\tau +a$ and $\varsigma =\frac{\gamma _{th}\upsilon _{SR}}{%
	\Omega \left( u\right) }.$

Now, substituting $\left( \ref{U1_nojammer_exp1}\right) $ into $\left( \ref%
{IP_new_expression}\right) ,$ the IP in the absence of a friendly jammer can
be written as 
\begin{align}
\text{IP}_{A}& =1-\frac{\xi _{E_{2}}^{2}\mathcal{D}_{A}^{\left( 0\right) }}{%
	\Gamma \left( \alpha _{E_{2}}\right) \Gamma \left( \beta _{E_{2}}\right) }
\label{IP_without_jammer_exp1} \\
& +\theta \left[ 
\begin{array}{c}
e^{(D,0)}\mathcal{D}_{A}^{\left( \xi _{D}^{2}\right) } \\ 
+\sum_{k=0}^{\infty }e^{(D,0,k)}\left( \beta _{D},\alpha _{D}\right) 
\mathcal{D}_{A}^{\left( \alpha _{D}+k\right) } \\ 
+\sum_{k=0}^{\infty }e^{(D,0,k)}\left( \alpha _{D},\beta _{D}\right) 
\mathcal{D}_{A}^{\left( \beta _{D}+k\right) }%
\end{array}%
\right] ,  \notag
\end{align}%
where 
\begin{equation}
\mathcal{D}_{A}^{\left( \tau \right) }=\frac{\Upsilon _{D}^{\tau }}{\left(
	\mu _{1}^{(D)}\right) ^{\tau }}\int_{u=0}^{\infty }f_{g_{SP}}\left( u\right) 
\mathcal{L}_{1}^{\left( \tau \right) }\left( u\right) du,
\label{Dtau_without_jammer_exp1}
\end{equation}%
To compute $\left( \ref{Dtau_without_jammer_exp1}\right) ,$ we need to
replace $\Omega \left( u\right) $ by its values. Therefore, when $u\leq
\sigma _{S}$ the term $\mathcal{L}_{1}^{\left( \tau \right) }\left( u\right) 
$ becomes constant. Thus, 
\begin{equation}
\mathcal{D}_{A}^{\left( \tau \right) }=\frac{\Upsilon _{D}^{\tau }}{\left(
	\mu _{1}^{(D)}\right) ^{\tau }}\left[ \mathcal{L}_{2}^{\left( \tau \right)
}\left( \overline{\gamma }_{S}\right) F_{g_{SP}}\left( \sigma _{S}\right)
+\lambda _{SP}\mathcal{L}_{2}^{\left( \tau \right) }\left( \overline{\gamma }%
_{I}\right) \right] ,  \label{Dtau_without_jammer_exp2}
\end{equation}%
with $\mathcal{L}_{2}^{\left( \tau \right) }\left( \overline{\gamma }%
_{I}\right) =\int_{\sigma _{S}}^{\infty }e^{-\lambda _{SP}u}\mathcal{L}%
_{1}^{\left( \tau \right) }\left( u\right) du.$

The term $\mathcal{L}_{2}^{\left( \tau \right) }\left( \overline{\gamma }%
_{S}\right) $ and $\mathcal{L}_{2}^{\left( \tau \right) }\left( \overline{%
	\gamma }_{I}\right) $ can be obtained by replacing $\Omega \left( u\right) =%
\overline{\gamma }_{S}$ and $\Omega \left( u\right) =\frac{\overline{\gamma }%
	_{I}}{u}$ in (\ref{L1_exp2}) and (\ref{F1_exp2}), respectively. Therefore, \ 
$\mathcal{L}_{2}^{\left( \tau \right) }\left( \overline{\gamma }_{I}\right) $
can be rewritten as 
\begin{align}
\mathcal{L}_{2}^{\left( \tau \right) }\left( \overline{\gamma }_{I}\right) &
=\Delta _{SE_{1}}\Delta _{SR}\left( \frac{\Upsilon _{D}}{\mu _{1}^{(D)}}%
\right) ^{\tau } \\
& \times \sum_{n_{3}=0}^{m_{SR}-1}\frac{\phi _{SR}^{(n_{3})}}{\overline{%
		\gamma }_{I}^{n_{3}+1}}{}\sum_{n_{2}=0}^{m_{SE_{1}}-1}\frac{\phi
	_{SE_{1}}^{(n_{2})}}{\upsilon _{SE_{1}}^{n_{2}+1}}  \notag \\
& \times \left[ 
\begin{array}{c}
{\small e}^{(E_{2},\tau )}\digamma _{2}^{\left( n_{2},n_{3},\xi
	_{E_{2}}^{2},\tau \right) } \\ 
+\sum_{k=0}^{\infty }{\small e}^{(E_{2},\tau ,k)}\left( {\small \beta }%
_{E_{2}}{\small ,\alpha }_{E_{2}}\right) {\small \digamma }_{2}^{\left( 
	{\small n}_{2}{\small ,n}_{3}{\small ,\alpha }_{E_{2}}{\small +k,\tau }%
	\right) } \\ 
+\sum_{k=0}^{\infty }{\small e}^{(E_{2},\tau ,k)}\left( {\small \alpha }%
_{E_{2}}{\small ,\beta }_{E_{2}}\right) {\small \digamma }_{2}^{\left( 
	{\small n}_{2}{\small ,n}_{3}{\small ,\beta }_{E_{2}}{\small +k,\tau }%
	\right) }%
\end{array}%
\right] ,  \notag
\end{align}%
where 
\begin{align}
\digamma _{2}^{\left( n_{2},n_{3},a,\tau \right) }& =\left( \frac{\Upsilon
	_{E_{2}}}{\mu _{1}^{(E_{2})}}\right) ^{a}\int_{\sigma _{S}}^{\infty
}e^{-\lambda _{SP}u}  \label{F2_exp2} \\
& \times u^{n_{3}+1}\digamma _{1}^{\left( n_{2},n_{3},\alpha _{E_{2}}+k,\tau
	\right) }\left( u\right) du  \notag \\
& =\left( \frac{\overline{\gamma }_{I}}{\upsilon _{SR}}\right) ^{\tau
	+n_{3}+a+1}\frac{1}{2\pi j}\int_{\mathcal{C}_{s}}\frac{\Gamma \left( {\small %
		n}_{2}{\small +1+s}\right) }{\Gamma \left( 1-{\small s}\right) }  \notag \\
& \times \Gamma \left( -{\small s}\right) \left( \frac{\upsilon _{SE_{1}}}{%
	\upsilon _{SR}}\right) ^{-s}\int_{\sigma _{S}}^{\infty }u^{-\tau
	-a}e^{-\lambda _{SP}u}  \notag \\
& \times \Gamma \left( 1+n_{3}+\tau +a-s,\epsilon _{I}\upsilon _{SR}u\right)
dsdu,  \notag
\end{align}%
where $\epsilon _{I}$ is defined in \textbf{Theorem \ref{theorem1}}.

Using \cite[Eq. (06.06.26.0005.01)]{Wolfram}, one obtains).%
\begin{align}
\digamma _{2}^{\left( n_{2},n_{3},a,\tau \right) }& =\left( \frac{\Upsilon
	_{E_{2}}}{\mu _{1}^{(E_{2})}}\right) ^{a}\left( \frac{\overline{\gamma }_{I}%
}{\upsilon _{SR}}\right) ^{\tau +n_{3}+a+1}\frac{\lambda _{SP}^{a+\tau -1}}{%
	\left( 2\pi j\right) ^{2}}  \label{F2_final} \\
& \times \int_{\mathcal{C}_{s}}\frac{\Gamma \left( {\small n}_{2}{\small +1+s%
	}\right) \Gamma \left( -{\small s}\right) }{\Gamma \left( 1-{\small s}%
	\right) }\left( \frac{\upsilon _{SE_{1}}}{\upsilon _{SR}}\right) ^{-s} 
\notag \\
& \times \int_{\mathcal{C}_{w}}\frac{\Gamma \left( -\tau -a-w+1,\lambda
	_{SP}\sigma _{S}\right) \left( \epsilon _{I}\upsilon _{SR}\right) ^{-w}}{%
	w\lambda _{SP}^{-w}}  \notag \\
& \times \Gamma \left( \tau +n_{3}+a+1-s+w\right) dwds,  \notag
\end{align}

\begin{itemize}
	\item \textbf{Presence of a friendly jammer}.
\end{itemize}

Using (\ref{gammaE1}), the conditional CDF of $\gamma _{E_{1}}^{(J)}$ for a
given $g_{SP}$ can be expressed as 
\begin{subequations}
	\begin{align}
	F_{\gamma _{E_{1}}^{(J)}|g_{SP}=u}\left( y\right) & =\Pr \left( \mathcal{U}%
	_{S}\leq y\left( \mathcal{U}_{S_{J}}+1\right) \right)  \label{F_gammaE1_exp1}
	\\
	& =1-\frac{y}{\Omega \left( u\right) }\int_{0}^{\infty }f_{g_{SE_{1}}}\left( 
	\frac{y\left( t+1\right) }{\Omega \left( u\right) }\right)
	\label{F_gammaE1_exp1_Step2} \\
	& \times F_{\mathcal{U}_{S_{J}}}\left( t\right) dt,  \notag
	\end{align}%
	where (\ref{F_gammaE1_exp1_Step2}) holds by using integration by parts.
	
	On the other hand, the CDF of $\mathcal{U}_{S_{J}}$ is given by 
\end{subequations}
\begin{align}
F_{\mathcal{U}_{S_{J}}}\left( t\right) & =\Pr \left( \min \left( \overline{
	\gamma }_{S_{J}},\frac{\overline{\gamma }_{I}}{g_{S_{J}P}}\right)
g_{S_{J}E_{1}}\leq t\right)  \label{F_UJ_exp1} \\
& =F_{g_{S_{J}E_{1}}}\left( \frac{t}{\overline{\gamma }_{S_{J}}}\right)
F_{g_{S_{J}P}}\left( \sigma _{S_{J}}\right)  \notag \\
& +\underset{\mathcal{K}_{1}}{\underbrace{\int_{\sigma _{S_{J}}}^{\infty
		}F_{g_{S_{J}E_{1}}}\left( \frac{t}{\overline{\gamma }_{I}}y\right)
		f_{g_{S_{J}P}}\left( y\right) dy}}.  \notag
\end{align}%
where $\sigma _{S_{J}}$ is defined in \textbf{Theorem \ref{theorem1}}.

Using (\ref{cdf_hop1}) and \cite[Eq. (06.06.26.0004.01)]{Wolfram}, the term $%
\mathcal{K}_{1}$ can be expressed as%
\begin{align}
\mathcal{K}_{1}& =\lambda _{S_{J}P}\Delta
_{S_{J}E_{1}}{}\sum_{n_{1}=0}^{m_{S_{J}E_{1}}-1}\frac{\phi
	_{S_{J}E_{1}}^{(n_{1})}}{\upsilon _{S_{J}E_{1}}^{n_{1}+1}}  \label{K1_final}
\\
& \times \int_{\sigma _{S_{J}}}^{\infty }e^{-\lambda
	_{S_{J}P}y}G_{1,2}^{1,1}\left( \frac{\upsilon _{S_{J}E_{1}}t}{\overline{
		\gamma }_{I}}y\left\vert 
\begin{array}{c}
1;- \\ 
n_{1}+1;0%
\end{array}%
\right. \right) dy  \notag \\
& =\Delta _{S_{J}E_{1}}{}\sum_{n_{1}=0}^{m_{S_{J}E_{1}}-1}\frac{\phi
	_{S_{J}E_{1}}^{(n_{1})}}{\upsilon _{S_{J}E_{1}}^{n_{1}+1}}\mathcal{G}%
_{1}\left( t\right) ,  \notag
\end{align}%
where $\mathcal{G}_{1}\left( t\right) =G_{2,2}^{1,2}\left( \frac{\upsilon
	_{S_{J}E_{1}}t}{\overline{\gamma }_{I}\lambda _{S_{J}P}}\left\vert 
\begin{array}{c}
\left( 1,0\right) ,\left( 0,\sigma _{S_{J}}\lambda _{S_{J}P}\right) ;- \\ 
\left( n_{1}+1,0\right) ;\left( 0,0\right)%
\end{array}%
\right. \right) .$

Substituting (\ref{K1_final}) into (\ref{F_UJ_exp1}), {yields the CDF\ of $%
	\mathcal{U}_{S_{J}}$} 
\begin{align}
F_{\mathcal{U}_{S_{J}}}\left( t\right) & =F_{g_{S_{J}E_{1}}}\left( \frac{t}{%
	\overline{\gamma }_{S_{J}}}\right) F_{g_{S_{J}P}}\left( \sigma
_{S_{J}}\right)   \label{F_UJ_final} \\
& +\Delta _{S_{J}E_{1}}{}\sum_{n_{1}=0}^{m_{S_{J}E_{1}}-1}\frac{\phi
	_{S_{J}E_{1}}^{(n_{1})}}{\upsilon _{S_{J}E_{1}}^{n_{1}+1}}\mathcal{G}%
_{1}\left( t\right) .  \notag
\end{align}%
Now, replacing (\ref{F_UJ_final}) into (\ref{F_UJ_exp1}), {the CDF of $%
	\gamma _{E_{1}}$ can be expressed as} 
\begin{align}
F_{\gamma _{E_{1}}^{(J)}|g_{SP}=u}\left( y\right) & =1-\frac{y\Delta
	_{SE_{1}}}{\Omega \left( u\right) }e^{-\frac{\upsilon _{SE_{1}}}{\Omega
		\left( u\right) }y}\sum_{n_{2}=0}^{m_{SE_{1}}-1}\frac{\phi
	_{SE_{1}}^{(n_{2})}y^{n_{2}}}{\Omega ^{n_{2}}\left( u\right) }
\label{F_gammaE1_exp2} \\
& \times \sum_{p=0}^{n_{2}}\binom{n_{2}}{p}\Delta
_{S_{J}E_{1}}{}\sum_{n_{1}=0}^{m_{S_{J}E_{1}}-1}\frac{\phi
	_{S_{J}E_{1}}^{(n_{1})}}{\upsilon _{S_{J}E_{1}}^{n_{1}+1}}  \notag \\
& \times \left[ F_{g_{S_{J}P}}\left( \sigma _{S_{J}}\right) \mathcal{V}%
_{1}\left( y,u\right) +\mathcal{V}_{2}\left( y,u\right) \right] dt,  \notag
\end{align}%
where 
\begin{equation}
\mathcal{V}_{1}\left( y,u\right) =\int_{0}^{\infty }t^{p}e^{-\frac{\upsilon
		_{SE_{1}}y}{\Omega \left( u\right) }t}\gamma _{inc}\left( n_{1}+1,\frac{%
	\upsilon _{S_{J}E_{1}}}{\overline{\gamma }_{S_{J}}}t\right) dt,
\end{equation}%
and 
\begin{equation}
\mathcal{V}_{2}\left( y,u\right) =\int_{0}^{\infty }t^{p}e^{-\frac{\upsilon
		_{SE_{1}}y}{\Omega \left( u\right) }t}\mathcal{G}_{1}\left( t\right) dt,
\end{equation}%
Using \cite[Eq.(07.34.21.0088.01)]{Wolfram}, the term $\mathcal{V}_{1}\left(
y,u\right) $ can be expressed as 
\begin{equation}
\mathcal{V}_{1}\left( y,u\right) =\left( \frac{\Omega \left( u\right) }{%
	\upsilon _{SE_{1}}y}\right) ^{p+1}G_{2,2}^{1,2}\left( \frac{\upsilon
	_{S_{J}E_{1}}\Omega \left( u\right) }{\upsilon _{SE_{1}}\overline{\gamma }%
	_{S_{J}}y}\left\vert 
\begin{array}{c}
-p,1;- \\ 
n_{1}+1;0%
\end{array}%
\right. \right) ,  \label{V1}
\end{equation}%
{while} the term $\mathcal{V}_{2}\left( y,u\right) $ can be evaluated as 
\begin{equation}
\mathcal{V}_{2}\left( y,u\right) =\left( \frac{\Omega \left( u\right) }{%
	\upsilon _{SE_{1}}y}\right) ^{p+1}\mathcal{G}_{2}\left( y,u\right) ,
\label{V2}
\end{equation}%
where $\mathcal{G}_{2}\left( y,u\right) =G_{3,2}^{1,3}\left( \rho \left\vert 
\begin{array}{c}
\left( 1,0\right) ,\left( 0,l\right) ,\left( -p,0\right) ;- \\ 
\left( n_{1}+1,0\right) ;\left( 0,0\right) 
\end{array}%
\right. \right) ,$ $l=\sigma _{S_{J}}\lambda _{S_{J}P},$ and $\rho =\frac{%
	\upsilon _{S_{J}E_{1}}\Omega \left( u\right) }{\lambda _{S_{J}P}\upsilon
	_{SE_{1}}\overline{\gamma }_{I}y}.$

Replacing (\ref{V1}) and (\ref{V2}) into (\ref{F_gammaE1_exp2}), and then
substituting the obtained expression of $F_{\gamma _{E_{1}}|g_{SP}=u}\left(
y\right) $ alongside (\ref{F_gammaRk_exp1}) {into (\ref{J1}), the term $%
	\mathcal{J}_{1}\left( y,u\right) $ can be expressed as}

\begin{align}
\mathcal{J}_{1}\left( y,u\right) & =\Delta _{SR}\sum_{n_{3}=0}^{m_{SR}-1}%
\frac{\phi _{SR}^{(n_{3})}y^{n_{3}}}{\Omega ^{n3+1}\left( u\right) }e^{-%
	\frac{\upsilon _{SR}}{\Omega \left( u\right) }y}  \label{J1_final} \\
& -\Delta _{SR}\Delta _{S_{J}E_{1}}\Delta
_{SE_{1}}\sum_{n_{1}=0}^{m_{S_{J}E_{1}}-1}\sum_{n_{2}=0}^{m_{SE_{1}}-1}%
\sum_{p=0}^{n_{2}}\sum_{n_{3}=0}^{m_{SR}-1}  \notag \\
& \times \frac{\mathcal{B}^{(n_{1},n_{2},n_{3},p)}}{\Omega
	^{n_{2}+n_{3}-p+1}\left( u\right) }e^{-\frac{\zeta }{\Omega \left( u\right) }%
	y}y^{n_{2}+n_{3}-p}  \notag \\
& \times \left[ 
\begin{array}{c}
F_{g_{S_{J}P}}\left( \sigma _{S_{J}}\right) {}G_{2,2}^{1,2}\left( \frac{%
	\upsilon _{S_{J}E_{1}}\Omega \left( u\right) }{\upsilon _{SE_{1}}\overline{%
		\gamma }_{S_{J}}y}\left\vert 
\begin{array}{c}
-p,1;- \\ 
n_{1}+1;0%
\end{array}%
\right. \right) \\ 
+\mathcal{G}_{2}\left( y,u\right)%
\end{array}%
\right] ,  \notag
\end{align}%
where $\mathcal{B}^{(n_{1},n_{2},n_{3},p)}$ is defined in \textbf{Theorem \ref{theorem1}}.

Now, substituting (\ref{J1_final}) and $\left( \ref{J2_final}\right) $ into (%
\ref{U1}), the term $\mathcal{U}_{1}\left( u\right) $ can be expressed as 
\begin{align}
\mathcal{U}_{1}\left( u\right) & =\frac{\xi _{E_{2}}^{2}\mathcal{T}%
	_{1}^{\left( 0\right) }\left( u\right) }{\Gamma \left( \alpha
	_{E_{2}}\right) \Gamma \left( \beta _{E_{2}}\right) }  \label{I1_exp2} \\
& -\theta \left[ 
\begin{array}{c}
e^{(D,0)}\mathcal{T}_{1}^{\left( \xi _{D}^{2}\right) }\left( u\right) \\ 
+\sum_{k=0}^{\infty }e^{(D,0,k)}\left( \beta _{D},\alpha _{D}\right) 
\mathcal{T}_{1}^{\left( \alpha _{D}+k\right) }\left( u\right) \\ 
+\sum_{k=0}^{\infty }e^{(D,0,k)}\left( \alpha _{D},\beta _{D}\right) 
\mathcal{T}_{1}^{\beta _{D}+k}\left( u\right)%
\end{array}%
\right] ,  \notag
\end{align}%
where 
\begin{align}
\mathcal{T}_{1}^{\left( \tau \right) }\left( u\right) & =\left( \frac{%
	\Upsilon _{D}}{\mu _{1}^{(D)}}\right) ^{\tau }\int_{y=\gamma _{th}}^{\infty }%
\frac{\mathcal{J}_{1}\left( y,u\right) }{y^{-\tau }}  \label{T_tau_expr1} \\
& \times G_{2,4}^{3,1}\left( \frac{\Upsilon _{E_{2}}y}{\mu _{1}^{(E_{2})}}%
\left\vert 
\begin{array}{c}
1-\tau ;\xi _{E_{2}}^{2}+1 \\ 
\xi _{E_{2}}^{2},\alpha _{E_{2}},\beta _{E_{2}};-\tau%
\end{array}%
\right. \right) .  \notag
\end{align}%
Using (\ref{J1_final}), the terms $\mathcal{T}_{1}^{\left( \tau \right)
}\left( u\right) $ can be expressed as%
\begin{align}
\mathcal{T}_{1}^{\left( \tau \right) }\left( u\right) & =\Delta _{SR}\left( 
\frac{\Upsilon _{D}}{\mu _{1}^{(D)}}\right) ^{\tau }\sum_{n_{3}=0}^{m_{SR}-1}%
\frac{\phi _{SR}^{(n_{3})}\Theta _{1}^{\left( n_{3},\tau \right) }(u)}{%
	\Omega ^{n3+1}\left( u\right) }  \notag \\
& -\Delta _{SR}\Delta _{S_{J}E_{1}}\Delta _{SE_{1}}\left( \frac{\Upsilon _{D}%
}{\mu _{1}^{(D)}}\right) ^{\tau }  \notag \\
& \times
\sum_{n_{1}=0}^{m_{S_{J}E_{1}}-1}\sum_{n_{2}=0}^{m_{SE_{1}}-1}%
\sum_{p=0}^{n_{2}}\sum_{n_{3}=0}^{m_{SR}-1}\frac{\mathcal{B}%
	^{(n_{1},n_{2},n_{3},p)}}{\Omega ^{n_{2}-p+n_{3}+1}\left( u\right) }  \notag
\\
& \times \left[ 
\begin{array}{c}
F_{g_{S_{J}P}}\left( \sigma _{S_{J}}\right) \Theta _{2}^{\left(
	n_{1},n_{2},n_{3},p,\tau \right) }(u){} \\ 
+{}\Theta _{3}^{\left( n_{1},n_{2},n_{3},p,\tau \right) }(u)%
\end{array}%
\right] ,  \label{T_tau_exp2}
\end{align}%
where%
\begin{align}
\Theta _{1}^{\left( n_{3},\tau \right) }\left( u\right) & =\int_{\gamma
	_{th}}^{\infty }\frac{e^{-\frac{\upsilon _{SR}}{\Omega \left( u\right) }y}}{%
	y^{-n_{3}-\tau }}  \label{Theta1_exp1} \\
& \times {\small G}_{2,4}^{3,1}\left( \frac{{\small \Upsilon }_{E_{2}}%
	{\small y}}{{\small \mu }_{1}^{(E_{2})}}\left\vert 
\begin{array}{c}
{\small 1-\tau };{\small \xi }_{E_{2}}^{2}{\small +1} \\ 
{\small \xi }_{E_{2}}^{2}{\small ,\alpha }_{E_{2}}{\small ,\beta }_{E_{2}};%
{\small -\tau }%
\end{array}%
\right. \right) {\small dy},  \notag
\end{align}%
\begin{align}
\Theta _{2}^{\left( n_{1},n_{2},n_{3},p,\tau \right) }\left( u\right) &
=\int_{\gamma _{th}}^{\infty }\frac{e^{-\frac{\zeta }{\Omega \left( u\right) 
		}y}}{y^{-n_{2}-n_{3}+p-\tau }}  \notag \\
& \times {\small G}_{2,4}^{3,1}\left( \frac{{\small \Upsilon }_{E_{2}}%
	{\small y}}{{\small \mu }_{1}^{(E_{2})}}\left\vert 
\begin{array}{c}
{\small 1-\tau };{\small \xi }_{E_{2}}^{2}{\small +1} \\ 
{\small \xi }_{E_{2}}^{2}{\small ,\alpha }_{E_{2}}{\small ,\beta }_{E_{2}}%
{\small ;-\tau }%
\end{array}%
\right. \right)  \notag \\
& \times {\small G}_{2,2}^{1,2}\left( \frac{\upsilon _{S_{J}E_{1}}\Omega
	\left( u\right) }{{\small \upsilon }_{SE_{1}}\overline{\gamma }_{S_{J}}%
	{\small y}}\left\vert 
\begin{array}{c}
{\small -p,1;-} \\ 
{\small n}_{1}{\small +1;0}%
\end{array}%
\right. \right) dy,  \label{Theta2_exp1}
\end{align}%
and

\begin{align}
\Theta _{3}^{\left( n_{1},n_{2},n_{3},p,\tau \right) }\left( u\right) &
=\int_{y=\gamma _{th}}^{\infty }\frac{e^{-\frac{\zeta }{\Omega \left(
			u\right) }y}}{y^{-n_{2}-n_{3}+p-\tau }}\mathcal{G}_{2}\left( y,u\right)
\label{Theta3_exp1} \\
& \times {\small G}_{2,4}^{3,1}\left( \frac{{\small \Upsilon }_{E_{2}}%
	{\small y}}{{\small \mu }_{1}^{(E_{2})}}\left\vert 
\begin{array}{c}
{\small 1-\tau };{\small \xi }_{E_{2}}^{2}{\small +1} \\ 
{\small \xi }_{E_{2}}^{2}{\small ,\alpha }_{E_{2}}{\small ,\beta }_{E_{2}}%
{\small ;-\tau }%
\end{array}%
\right. \right) {\small dy},  \notag
\end{align}

Using (\ref{residd}), (\ref{Theta1_exp1}) can be expressed as 
\begin{align}
\Theta _{1}^{\left( n_{3},\tau \right) }\left( u\right) & =e^{(E_{2},\tau
	)}\Lambda _{1}^{\left( n_{3},\xi _{E_{2}}^{2},\tau \right) }\left( u\right)
\label{Theta1_final} \\
& +\sum_{k=0}^{\infty }e^{(E_{2},\tau ,k)}\left( {\small \beta }_{E_{2}}%
{\small ,\alpha }_{E_{2}}\right) \Lambda _{1}^{\left( n_{3},\alpha
	_{E_{2}}+k,\tau \right) }\left( u\right)  \notag \\
& +\sum_{k=0}^{\infty }e^{(E_{2},\tau ,k)}\left( {\small \alpha }_{E_{2}}%
{\small ,\beta }_{E_{2}}\right) \Lambda _{1}^{\left( n_{3},\beta
	_{E_{2}}+k,\tau \right) }\left( u\right) ,  \notag
\end{align}%
where%
\begin{align}
\Lambda _{1}^{\left( n_{3},a,\tau \right) }\left( u\right) & =\left( \frac{%
	\Upsilon _{E_{2}}}{\mu _{1}^{(E_{2})}}\right) ^{a}\left( \frac{\Omega \left(
	u\right) }{\upsilon _{SR}}\right) ^{\tau +n_{3}+a+1} \\
& \times \Gamma \left( \tau +n_{3}+a+1,\frac{\gamma _{th}\upsilon _{SR}}{%
	\Omega \left( u\right) }\right) ,  \notag
\end{align}%
$g(a,u,\tau )=$ and $a=\{\xi _{E_{2}}^{2},\alpha _{E_{2}}+k,\beta
_{E_{2}}+k\}.$

Similarly to (\ref{Theta1_final}), (\ref{Theta2_exp1}) can be expressed as 
\begin{align}
\Theta _{2}^{\left( n_{1},n_{2},n_{3},p,\tau \right) }\left( u\right) &
=e^{(E_{2},\tau )}\mathcal{M}_{1}^{\left( n_{1,}n_{2},n_{3},p,\xi
	_{E_{2}}^{2},\tau \right) }\left( u\right)   \notag \\
& +\sum_{k=0}^{\infty }\Lambda _{2}^{\left( \alpha _{E_{2}}+k\right) }\left(
u,\beta _{E_{2}},\alpha _{E_{2}}\right)   \notag \\
& +\sum_{k=0}^{\infty }\Lambda _{2}^{\left( \beta _{E_{2}}+k\right) }\left(
u,\alpha _{E_{2}},\beta _{E_{2}}\right) ,  \label{Theta2_final}
\end{align}%
where%
\begin{equation}
\Lambda _{2}^{\left( a\right) }\left( u,x,y\right) =e^{(E_{2},\tau
	,k)}\left( x,y\right) \mathcal{M}_{1}^{(n_{1,}n_{2},n_{3},p,a,\tau )}\left(
u\right) ,
\end{equation}%
\begin{equation}
\mathcal{M}_{1}^{({\small n}_{1,}{\small n}_{2}{\small ,n}_{3}{\small %
		,p,a,\tau })}\left( u\right) =\left( \frac{\Upsilon _{E_{2}}}{\mu
	_{1}^{(E_{2})}}\right) ^{a}\left( \frac{\Omega \left( u\right) }{\zeta }%
\right) ^{\varkappa }\mathcal{H}_{1}\left( u\right) ,  \label{M_exp1}
\end{equation}%
with $\varkappa =n_{2}+n_{3}-p+1+a+\tau ,$ $\mathcal{H}_{1}\left( u\right)
=G_{2,3}^{2,2}\left( \frac{{\small \chi }}{\overline{{\small \gamma }}_{%
		{\small S}_{J}}}\left\vert 
\begin{array}{c}
\left( {\small -p,0}\right) ,\left( {\small 1,0}\right) ;- \\ 
\left( {\small n}_{1}{\small +1,0}\right) ,\left( \varkappa ,\frac{\zeta
	\gamma _{th}}{\Omega \left( u\right) }\right) ;\left( {\small 0,0}\right) 
\end{array}
\right. \right) $, and $\chi $ is defined in \textbf{Theorem \ref{theorem1}}.

Analogously to (\ref{Theta2_exp1}) and (\ref{Theta3_exp1}), the term $%
\Theta _{3}^{\left( n_{2},n_{3},p\right) }\left( u\right) $ can be evaluated
as 
\begin{align}
\Theta _{3}^{\left( n_{1},n_{2},n_{3},p,\tau \right) }\left( u\right) &
=e^{(E_{2},\tau )}\Phi _{1}^{\left( n_{1,}n_{2},n_{3},p,\xi
	_{E_{2}}^{2},\tau \right) }\left( u\right)   \label{Theta3_final} \\
& +\sum_{k=0}^{\infty }\Lambda _{3}^{\left( \alpha _{E_{2}}+k\right) }\left(
u,\beta _{E_{2}},\alpha _{E_{2}}\right)   \notag \\
& +\sum_{k=0}^{\infty }\Lambda _{3}^{\left( \beta _{E_{2}}+k\right) }\left(
u,\alpha _{E_{2}},\beta _{E_{2}}\right) ,  \notag
\end{align}%
where%
\begin{equation}
\Lambda _{3}^{\left( a\right) }\left( u,x,y\right) =e^{(E_{2},\tau
	,k)}\left( x,y\right) {\small \Phi }_{1}^{\left( {\small n}_{1,}{\small n}%
	_{2}{\small ,n}_{3}{\small ,p,a,\tau }\right) }\left( u\right) ,
\end{equation}%
\begin{align}
{\small \Phi }_{1}^{\left( {\small n}_{1,}{\small n}_{2}{\small ,n}_{3}%
	{\small ,p,a,\tau }\right) }\left( u\right) & =\left( \frac{\Upsilon _{E_{2}}%
}{\mu _{1}^{(E_{2})}}\right) ^{a}\left( \frac{\Omega \left( u\right) }{\zeta 
}\right) ^{\varkappa }  \label{Phia_exp1} \\
& \times {\small G}_{3,3}^{2,3}\left( {\small b}\left\vert 
\begin{array}{c}
\left( {\small -p,0}\right) ,\left( {\small 1,0}\right) ,\left( {\small 0,c}%
\right) ;- \\ 
\left( {\small n}_{1}{\small +1,0}\right) ,\left( {\small \varkappa ,v}%
\right) ;\left( {\small 0,0}\right) 
\end{array}%
\right. \right) .  \notag
\end{align}%
where $b=\frac{\chi }{\lambda _{S_{J}P}\overline{\gamma }_{I}}$, $c={\small %
	\sigma }_{S_{J}}{\small \lambda }_{S_{J}P},$ and $v=\frac{\zeta \gamma _{th}%
}{\Omega \left( u\right) }.$

{Now, the remaining last step in this proof consists of computing the
	expression for the IP in the presence of a friendly jammer by incorporating $%
	\left( \ref{I1_exp2}\right) $ into $\left( \ref{IP_new_expression}\right) $
	as} 
\begin{align}
\text{IP}_{P}& =1-\frac{\xi _{E_{2}}^{2}\mathcal{D}_{P}^{\left( 0\right) }}{%
	\Gamma \left( \alpha _{E_{2}}\right) \Gamma \left( \beta _{E_{2}}\right) }
\label{IP_exp3} \\
& +\theta \left[ 
\begin{array}{c}
e^{(D,0)}\mathcal{D}_{P}^{\left( \xi _{D}^{2}\right) } \\ 
+\sum_{k=0}^{\infty }e^{(D,0,k)}\left( \beta _{D},\alpha _{D}\right) 
\mathcal{D}_{P}^{\left( \alpha _{D}+k\right) } \\ 
+\sum_{k=0}^{\infty }e^{(D,0,k)}\left( \alpha _{D},\beta _{D}\right) 
\mathcal{D}_{P}^{\left( \beta _{D}+k\right) }%
\end{array}%
\right] ,  \notag
\end{align}%
where 
\begin{equation}
\mathcal{D}_{P}^{\left( \tau \right) }=\frac{\Upsilon _{D}^{\tau }}{\left(
	\mu _{1}^{(D)}\right) ^{\tau }}\int_{u=0}^{\infty }f_{g_{SP}}\left( u\right) 
\mathcal{T}_{1}^{\left( \tau \right) }\left( u\right) du.
\label{Dtau_jammer}
\end{equation}%
To compute $\left( \ref{Dtau_jammer}\right) ,$ we need to replace $\Omega
\left( u\right) $ by its values in (\ref{T_tau_exp2}). Therefore, when $%
u\leq \sigma _{S},$ the term $\mathcal{T}_{1}^{\left( \tau \right) }\left(
u\right) $ becomes constant. It follows that%
\begin{equation}
\mathcal{D}_{P}^{\left( \tau \right) }=\frac{\mathcal{T}_{2}^{\left( \tau
		\right) }\left( \overline{\gamma }_{S}\right) F_{g_{SP}}\left( \sigma
	_{S}\right) +\lambda _{SP}\mathcal{T}_{2}^{\left( \tau \right) }\left( 
	\overline{\gamma }_{I}\right) }{\left( \mu _{1}^{(D)}\right) ^{\tau }},
\label{Dtau_jammer_exp2}
\end{equation}%
with $\mathcal{T}_{2}^{\left( \tau \right) }\left( \overline{\gamma }%
_{I}\right) =\int_{\sigma _{S}}^{\infty }e^{-\lambda _{SP}u}\mathcal{T}%
_{1}^{\left( \tau \right) }\left( u\right) ,$ while the term $\mathcal{T}%
_{2}^{\left( \tau \right) }\left( \overline{\gamma }_{S}\right) $ can be
obtained by replacing $\Omega \left( u\right) =\overline{\gamma }_{S}$ in (%
\ref{T_tau_exp2}).

Also, to compute $\mathcal{T}_{2}^{\left( \tau \right) }\left( \overline{%
	\gamma }_{I}\right) $, we need to replace $\Omega \left( u\right) =\frac{%
	\overline{\gamma }_{I}}{u}$ in (\ref{T_tau_exp2}) and compute the following
integrals%
\begin{equation}
\mathcal{S}^{\left( n_{3},a,\tau \right) }=\int_{\sigma _{S}}^{\infty }\frac{%
	e^{-\lambda _{SP}u}}{u^{a+\tau }}\Gamma \left( \tau +n_{3}+a+1,\epsilon
_{I}\upsilon _{SR}u\right) du,  \label{S1_exp1}
\end{equation}%
\begin{equation}
\mathcal{Y}^{(n_{1,}n_{2},n_{3},p,a,\tau )}=\int_{\sigma _{S}}^{\infty }%
\frac{e^{-\lambda _{SP}u}}{u^{a+\tau }}\mathcal{M}%
_{1}^{(n_{1,}n_{2},n_{3},p,a,\tau )}\left( u\right) du,  \label{Y2_exp1}
\end{equation}%
and 
\begin{equation}
\mathcal{W}^{(n_{1,}n_{2},n_{3},p,a,\tau )}=\int_{\sigma _{S}}^{\infty }%
\frac{e^{-\lambda _{SP}u}}{u^{a+\tau }}\Phi
_{1}^{(n_{1,}n_{2},n_{3},p,a,\tau )}\left( u\right) du,  \label{Wa_exp1}
\end{equation}%
where $\epsilon _{I}$ is defined is \textbf{Theorem \ref{theorem1}}.

Using \cite[Eq. 06.06.26.0005.01]{Wolfram}, (\ref{S1_exp1}) can be expressed as
\begin{eqnarray}
\mathcal{S}^{\left( n_{3},a,\tau \right) } &=&\frac{G_{2,2}^{2,1}\left(
	\varepsilon \left\vert 
	\begin{array}{c}
	\left( a+\tau ,\lambda _{SP}\sigma _{S}\right) ;\left( 1,0\right)  \\ 
	\left( 0,0\right) ,\left( \tau +n_{3}+a+1,0\right) ;-%
	\end{array}%
	\right. \right) }{\lambda _{SP}^{-a-\tau +1}},  \notag \\
&&  \label{S1_final}
\end{eqnarray}
with $\varepsilon =\frac{\epsilon _{I}\upsilon _{SR}}{\lambda _{SP}},$ while (\ref{Y2_exp1}) and (\ref{Wa_exp1}) can be evaluated by replacing {$\Omega \left( u\right) =\frac{\overline{\gamma }_{I}}{u}$ into (\ref{M_exp1}) and (\ref{Phia_exp1})}, respectively, and calculating the following common integral

\begin{align}
\mathcal{A}^{(n_{1,}n_{2},n_{3},p,\tau )}& =\int_{\sigma _{S}}^{\infty
}u^{-a-\tau }e^{-\lambda _{SP}u}\Gamma \left( \varkappa +s,\epsilon
_{I}\zeta u\right) du  \label{A_final} \\
& \overset{\left( a\right) }{=}\frac{1}{2\pi j}\int_{\mathcal{C}_{w}}\frac{%
	\Gamma \left( w\right) \Gamma \left( \varkappa +s+w\right) }{\Gamma \left(
	1+w\right) }  \notag \\
& \times \left( \frac{\epsilon _{I}\zeta }{\lambda _{SP}}\right) ^{-w}\frac{%
	\Gamma \left( 1-a-\tau -w,\sigma _{S}\lambda _{SP}\right) }{\lambda
	_{SP}^{-a-\tau +1}}dw,  \notag
\end{align}%
where step $(a)$ follows using \cite[Eq. 06.06.26.0005.01]{Wolfram}.

Finally, substituting (\ref{A_final}) into (\ref{Y2_exp1}) and (\ref{Wa_exp1}%
), one obtains (\ref{Ya_final}) and (\ref{Wa_final}) which concludes the
proof of \textbf{Theorem \ref{theorem1}}.\qedhere

\section*{Appendix C: proof of Theorem 2}

{One can clearly see from (\ref{IP_final}) and (\ref{Dtau_jammer_final})
	that the asymptotic expression for the IP depends on approximating (\ref{Em}%
	), which can be obtained by determining the asymptotic expansion of (\ref%
	{Psi2}) and (\ref{Psi3}). To do so,} the residues theorem is applied to find
the asymptotic expressions of functions given in (\ref{M2})-(\ref{Wa_final}).

First, the functions given in (\ref{M2}) and (\ref{Phi2}) can be rewritten
as Mellin-Barnes integrals as 
\begin{align}
\mathcal{M}^{(n_{1,}n_{2},n_{3},p,a,\tau )}& =\frac{1}{2\pi j}\int_{\mathcal{%
		C}}\frac{\Gamma \left( n_{1}+1+s\right) {\small \Gamma }\left( {\small %
		\varkappa +s},{\small \zeta }\epsilon _{I}\sigma _{S}\right) }{\Gamma \left(
	1-s\right) }  \label{M2_integral} \\
& \times \Gamma \left( 1+p-s\right) \Gamma \left( -s\right) \left( \frac{%
	{\small \chi \sigma }_{S_{J}}}{\overline{\gamma }_{I}}\right) ^{-s}{\small ds%
},  \notag
\end{align}%
and

\begin{align}
\Phi ^{(n_{1,}n_{2},n_{3},p,a,\tau )}& =\frac{1}{2\pi j}\int_{\mathcal{C}}%
\frac{\Gamma \left( n_{1}+1+s\right) \Gamma \left( \varkappa +s,\zeta
	\epsilon _{I}\sigma _{S}\right) }{\Gamma \left( 1-s\right) }
\label{Phi2_integral} \\
& \times \Gamma \left( 1+p-s\right) \Gamma \left( -s\right)   \notag \\
& \times \Gamma \left( 1-s,\sigma _{S_{J}}\lambda _{S_{J}P}\right) \left( 
\frac{\chi }{\lambda _{S_{J}P}\overline{\gamma }_{I}}\right) ^{-s}ds,  \notag
\end{align}%
where $\varkappa =n_{2}+n_{3}-p+1+a+\tau ,$ and {$j=\sqrt{-1},$ $\mathcal{C}%
	_{s}$ is a vertical line of integration chosen such as to separate the left
	poles of the above integrand functions from the right ones,}

It is noteworthy that the same complex contour, namely {$\mathcal{C}_{s}$}
can be used to evaluate both integrals as the upper incomplete Gamma
function has no poles and both integrands have the same poles. Moreover,\
the conditions of \cite[Theorem 1.5]{kilbas} hold. That is, the two above
complex integrals can be written as an infinite sum of the poles belonging
to the left half plan of $\mathcal{C}$. Furthermore, it is clearly seen that
(\ref{M2_integral}) and (\ref{Phi2_integral}) both have same left poles $%
-n_{1}-1-k,$ $k\in 
\mathbb{N}
.$ It follows that%
\begin{align}
\mathcal{M}^{(n_{1,}n_{2},n_{3},p,a,\tau )}& =\sum_{k=0}^{\infty }\frac{%
	(-1)^{k}\Gamma \left( 2+p+n_{1}+k\right) }{k!\left( n_{1}+1+k\right) 
	\overline{\gamma }_{I}^{n_{1}+1+k}}  \notag \\
& \times \frac{\Gamma \left( \varkappa -n_{1}-1-k,\zeta \epsilon _{I}\sigma
	_{S}\right) }{\left( \chi \sigma _{S_{J}}\right) ^{-n_{1}-1-k}},
\label{M2_residue}
\end{align}%
and 
\begin{align}
\Phi ^{(n_{1,}n_{2},n_{3},p,a,\tau )}& =\sum_{k=0}^{\infty }\frac{%
	(-1)^{k}\Gamma \left( 2+p+n_{1}+k\right) }{k!\left( n_{1}+1+k\right) \left(
	\lambda _{S_{J}P}\overline{\gamma }_{I}\right) ^{n_{1}+1+k}}
\label{Phi2_residue} \\
& \times \Gamma \left( \varkappa -n_{1}-1-k,\zeta \epsilon _{I}\sigma
_{S}\right)  \notag \\
& \times \Gamma \left( 2+n_{1}+k,\sigma _{S_{J}}\lambda _{S_{J}P}\right)
\chi ^{n_{1}+1+k}.  \notag
\end{align}%
By considering only the first term of the infinite summation when $\overline{%
	\gamma }_{I}\rightarrow \infty $, (\ref{M2_residue}) and (\ref{Phi2_residue}%
) can be asymptotically approximated by

\begin{align}
\mathcal{M}^{(n_{1,}n_{2},n_{3},p,a,\tau )}& \sim \frac{\Gamma \left(
	\varkappa -n_{1}-1-k,\zeta \epsilon _{I}\sigma _{S}\right) }{n_{1}+1}  \notag
\\
& \times \Gamma \left( 2+p+n_{1}\right) \left( \frac{\chi \sigma _{S_{J}}}{%
	\overline{\gamma }_{I}}\right) ^{n_{1}+1},  \label{M2_approximate}
\end{align}

\begin{align}
\Phi ^{(n_{1,}n_{2},n_{3},p,a,\tau )}& \sim \frac{\Gamma \left(
	2+n_{1},\sigma _{S_{J}}\lambda _{S_{J}P}\right) }{n_{1}+1}
\label{Phi2_approximate} \\
& \times \Gamma \left( 2+p+n_{1}\right) \left( \frac{\chi }{\lambda _{S_{J}P}%
	\overline{\gamma }_{I}}\right) ^{n_{1}+1}  \notag \\
& \times \Gamma \left( \varkappa -n_{1}-1,\zeta \epsilon _{I}\sigma
_{S}\right) .  \notag
\end{align}%
In similar manner to (\ref{M2_residue}) and (\ref{Phi2_residue}), (\ref%
{Ya_final}) and (\ref{Wa_final}) can be, respectively, expressed as infinite
sums as follows 
\begin{align}
\mathcal{Y}^{(n_{1,}n_{2},n_{3},p,a,\tau )}& =\frac{\lambda _{SP}^{a+\tau -1}%
}{2\pi j}\int_{\mathcal{C}_{w}}\frac{\Gamma \left( {\small 1-a-\tau -w},%
	{\small \sigma }_{S}{\small \lambda }_{SP}\right) }{w\left( \frac{{\small %
			\zeta \epsilon }_{I}}{{\small \lambda }_{SP}}\right) ^{w}}  \notag \\
& \times \left[ 
\begin{array}{c}
\sum_{k=0}^{\infty }\frac{(-1)^{k}\Gamma \left( 2+p+n_{1}+k\right) }{%
	k!\left( n_{1}+1+k\right) \overline{\gamma }_{I}^{n_{1}+1+k}} \\ 
\times \frac{{\small \Gamma }\left( {\small \varkappa +w-n}_{1}{\small -1-k}%
	\right) }{\left( \chi \sigma _{S_{J}}\right) ^{-n_{1}-1-k}} \\ 
+\frac{\left( -1\right) ^{k}\Gamma \left( n_{1}+1-\varkappa -w-k\right) }{%
	k!\left( \varkappa +w+k\right) \overline{\gamma }_{I}^{\varkappa +w+k}} \\ 
\times \frac{\Gamma \left( 1+p+\varkappa +w+k\right) }{\left( \chi \sigma
	_{S_{J}}\right) ^{-\varkappa -w-k}}%
\end{array}%
\right] dw,  \notag \\
&  \label{Ya_residue}
\end{align}%
and%
\begin{align}
\mathcal{W}^{(n_{1,}n_{2},n_{3},p,a,\tau )}& =\int_{\mathcal{C}_{w}}\frac{%
	\Gamma \left( 1-a-\tau -w,\sigma _{S}\lambda _{SP}\right) }{w}\left( \frac{%
	\zeta \epsilon _{I}}{\lambda _{SP}}\right) ^{-w}  \notag \\
& \times \left[ 
\begin{array}{c}
\sum_{k=0}^{\infty }\frac{(-1)^{k}\Gamma \left( 2+p+n_{1}+k\right) }{%
	k!\left( n_{1}+1+k\right) } \\ 
\Gamma \left( \varkappa +w-n_{1}-1-k\right) \\ 
\times \Gamma \left( {\small 2+n}_{1}{\small +k},{\small \sigma }_{S_{J}}%
{\small \lambda }_{S_{J}P}\right) \\ 
\times \left( \frac{\chi }{\lambda _{S_{J}P}\overline{\gamma }_{I}}\right)
^{n_{1}+1+k} \\ 
+\frac{\left( -1\right) ^{k}\Gamma \left( n_{1}+1-\varkappa -w-k)\right) }{%
	k!\left( \varkappa +w+k\right) } \\ 
\times \Gamma \left( 1+p+\varkappa +w+k\right) \\ 
\times \Gamma \left( 1+\varkappa +w+k,\sigma _{S_{J}}\lambda _{S_{J}P}\right)
\\ 
\times \left( \frac{\chi }{\lambda _{S_{J}P}\overline{\gamma }_{I}}\right)
^{\varkappa +w+k}%
\end{array}%
\right] dw,  \notag \\
&  \label{Wa_residue}
\end{align}

Subsequently, their asymptotic expression in high SNR regime can be
straightforwardly obtained by taking the first term of the two above
infinite summations as

\begin{align}
\mathcal{Y}^{(n_{1,}n_{2},n_{3},p,a,\tau )}& \sim \frac{\lambda
	_{SP}^{a+\tau -1}\Gamma \left( 2+p+n_{1}\right) }{n_{1}+1}\left( \frac{\chi
	\sigma _{S_{J}}}{\overline{\gamma }_{I}}\right) ^{n_{1}+1}
\label{Ya_approximate} \\
& \times {\small G}_{2,2}^{2,1}\left( \frac{{\small \zeta \epsilon }_{I}}{%
	{\small \lambda }_{SP}}\left\vert 
\begin{array}{c}
\left( {\small a+\tau ,\sigma }_{S}{\small \lambda }_{SP}\right) ;\left( 
{\small 1,0}\right) \\ 
\left( 0,0\right) ,\left( {\small \varkappa -n}_{1}{\small -1,0}\right) ;-%
\end{array}%
\right. \right) ,  \notag
\end{align}%
and

\begin{align}
\mathcal{W}^{(n_{1,}n_{2},n_{3},p,a,\tau )}& \sim \frac{\lambda
	_{SP}^{a+\tau -1}\Gamma \left( 2+p+n_{1}\right) }{n_{1}+1}
\label{Wa_approximate} \\
& \times \Gamma \left( 2+n_{1},\sigma _{S_{J}}\lambda _{S_{J}P}\right)
\left( \frac{\chi }{\lambda _{S_{J}P}\overline{\gamma }_{I}}\right)
^{n_{1}+1}  \notag \\
& \times {\small G}_{2,2}^{2,1}\left( \frac{{\small \zeta \epsilon }_{I}}{%
	{\small \lambda }_{SP}}\left\vert 
\begin{array}{c}
\left( {\small a+\tau ,\sigma }_{S}{\small \lambda }_{SP}\right) ;\left( 
{\small 1,0}\right) \\ 
\left( 0,0\right) ,\left( {\small \varkappa -n}_{1}{\small -1,0}\right) ;-%
\end{array}%
\right. \right) .  \notag
\end{align}%
Lastly, substituting (\ref{M2_approximate}) alongside (\ref{Ya_approximate}%
) and (\ref{Phi2_approximate}) along with (\ref{Wa_approximate}) into (\ref%
{Psi2}) and (\ref{Psi3}), respectively, (\ref{Psi2_approximate}) and (\ref%
{Psi3_approximate}) are attained. This concludes the proof of \textbf{Theorem \ref{theorem2}}.\qedhere

\ifCLASSOPTIONcaptionsoff
  \newpage
\fi




\begin{thebibliography}{1}	
	\bibitem{Dang}{ S. Dang, O. Amin, B. Shihada, and M. -S. Alouini, "What should 6G be ?",  \textit{Nature Electronics}, vol. 3, no. 1, pp. 20-29, Jan. 2020.}
	
	\bibitem{Yaacoub}{E. Yaacoub and M. -S. Alouini, "A key 6G challenge and opportunity-Connecting the base of the pyramid: A survey on rural connectivity," In \textit{Proc. of the IEEE}, vol. 108, no. 4, pp. 549-582, April 2020.}
	
	\bibitem{Gharanjik2013} A. Gharanjik, B. S. M. R. Rao, P. Arapoglou, and B.
	Ottersten, "Large scale transmit diversity in Q/V band feeder link with
	multiple gateways," in \textit{Proc. 24th IEEE Annual Intern. Symp. on
		Personal, Indoor, and Mobile Radio Communication (PIMRC)}, Sep. 2013, pp.
	766-770.
	
	\bibitem{corasat2012} B. Evans, P. Thompson, E. Lagunas, S. K. Sharma, D.
	Tarchi, and V. R. Icolari, "Extending the Usable Ka Band Spectrum for
	Satellite Communications: The CoRaSat project," \textit{in Proc. Intern.
		Conf. on Wireless and Satellite Systems}, Dec. 2015, pp. 119-132.
	
	\bibitem{Esmai2015} M. A. Esmai, A. Ragheb, H. Fathallah, and M.-S. Alouini,
	"Experimental demonstration of outdoor 2.2 Tbps super-channel FSO
	transmission system," \textit{In Proc. 2016 IEEE Inter. Conf. on
		Communications Workshops (ICC)}, Jul. 2016, pp. 169-174.
	
	\bibitem{Wyner} A. Wyner, "The wire-tap channel", \textit{Bell System
		Technical Journal}, vol. 54, no. 8, pp. 1355-1387, Oct. 1975. 
	
	\bibitem{lei2016secrecy} H. Lei, C. Gao, I. S. Ansari, Y. Guo, Y. Zou, G.
	Pan, and K. A. Qaraqe, "Secrecy outage performance of transmit antenna
	selection for MIMO underlay cognitive radio systems over Nakagami-$m$
	channels," \textit{IEEE Trans. Veh. Technol.}, vol. 66, no. 3, pp.
	2237-2250, May 2016.
	
	\bibitem{tran2017cognitive} H. Tran, G. Kaddoum, F. Gagnon, and L. Sibomana,
	"Cognitive radio network with secrecy and interference constraints," \textit{%
		Physical Communication}, vol. 22, pp. 32-41, Mar. 2017.
	
	\bibitem{elkashlan2014security} M. Elkashlan, L. Wang, T. Q. Duong, G. K.
	Karagiannidis, and A. Nallanathan, "On the security of cognitive radio
	networks," \textit{IEEE Trans. Veh. Technol.}, vol. 64, no. 8, pp.
	3790-3795, Aug. 2015.
	
	\bibitem{nguyen2017secure} N.-P. Nguyen, T. L. Thanh, T. Q. Duong, and A.
	Nallanathan, "Secure communications in cognitive underlay networks over
	Nakagami-$m$ channel," \textit{Physical Communication}, vol. 25, pp.
	610-618, Dec. 2017. 
	
	\bibitem{sakran2012proposed} H. Sakran, M. Shokair, O. Nasr, S. El-Rabaie,
	and A. A. El-Azm, "Proposed relay selection scheme for physical layer
	security in cognitive radio networks," \textit{IET Communications}, vol. 6,
	no. 16, pp. 2676-2687, Nov. 2012. 
	
	\bibitem{ding2016secrecy} J. Ding, Q. Yang, and J. Yang, "Secrecy outage
	probability of minimum relay selection in multiple eavesdroppers DF
	cognitive radio networks," \textit{in Proc. 83rd IEEE Vehicular Technology
		Conference (VTC Spring)}, July 2016, doi: 10.1109/VTCSpring.2016.7504463. 
	
	\bibitem{ho2017analysis} K. Ho-Van and T. Do-Dac, "Analysis of security
	performance of relay selection in underlay cognitive networks," \textit{IET
		Communications}, vol. 12, no. 1, pp. 102-108, Jan. 2017.
	
	\bibitem{lei2017secrecy} H. Lei, H. Zhang, I. S. Ansari, Z. Ren, G. Pan, K.
	A. Qaraqe, and M.-S. Alouini, "On secrecy outage of relay selection in
	underlay cognitive radio networks over Nakagami-$m$ fading channels," 
	\textit{IEEE Trans. Cogn. Commun. Netw.}, vol. 3, no. 4, pp. 614-627, Oct.
	2017. 
	
	\bibitem{zou2016physical} Y. Zou, "Physical-layer security for spectrum
	sharing systems," \textit{IEEE Trans. Wireless Commun.}, vol. 16, no. 2, pp.
	1319-1329, Dec. 2016.
	
	\bibitem{liu2014relay} Y. Liu, L. Wang, T. T. Duy, M. Elkashlan, and T. Q.
	Duong, "Relay selection for security enhancement in cognitive relay
	networks," \textit{IEEE Wireless Commun. Lett.}, vol. 4, no. 1, pp. 46-49,
	Feb. 2014.
	
	\bibitem{MouniaBouabdellah} M. Bouabdellah, F. El Bouanani, and M.-S.
	Alouini, "A PHY Layer security analysis of uplink cooperative jamming-based
	underlay CRNs with multi-eavesdroppers," \textit{IEEE Trans. Cog. Commun.
		Netw.}, vol. 6, no. 2, pp. 704-717, June 2020. 
	
	\bibitem{TVT2} P. Yan, Y. Zou, and J. Zhu, "Energy-aware multiuser
	scheduling for physical-layer security in energy-harvesting underlay
	cognitive radio Systems," \textit{IEEE Trans. Veh. Technol.}, vol. 67, no.
	3, pp. 2084-2096, Mar. 2018.
	
	\bibitem{tgcn} H. Lei, M. Xu, I. Ansari, G. Pan, K. Qaraqe, and M.-S.
	Alouini, "On secure underlay MIMO cognitive radio networks with energy
	harvesting and transmit antenna selection," \textit{IEEE Trans. Green
		Commun. Netw.}, vol. 1, no. 2, pp. 192-203, Mar. 2017.
	
	\bibitem{ISWCS} M. Bouabdellah, F. El Bouanani, P. C. Sofotasios, D. B. da
	Costa, K. Mezher, H. Ben-azza, S. Muhaidat, G. K. Karagiannidis, "Physical
	layer security for dual-hop SWIPT-enabled CR networks," \textit{in Proc.
		16th Intern. Symp. on Wireless Communication Systems (ISWCS)}, Aug. 2019,
	pp. 629-634.
	
	\bibitem{AccessMounia} {M. Bouabdellah, F. El Bouanani, P. C. Sofotasios, D.
		B. da Costa, S. Muhaidat, K. Mezher, H. Ben-azza, and G. K. Karagiannidis,
		"Cooperative energy harvesting cognitive radio networks with spectrum
		sharing and security constraints," \textit{IEEE Access}, vol. 7, pp.
		173329-173343, Nov. 2019.} 
	
	\bibitem{An2016JSAC} K. An, M. Lin, J. Ouyang and W. Zhu, "Secure
	transmission in cognitive satellite terrestrial networks," \textit{IEEE J.
		Sel. Areas Commun.}, vol. 34, no. 11, pp. 3025-3037, Oct. 2016.
	
	\bibitem{LU2018access} W. Lu, T. Liang, K. An, and H. Yang, "Secure
	Beamforming and artificial noise algorithms in cognitive
	satellite-terrestrial networks with multiple eavesdroppers," \textit{IEEE
		Access}, vol. 6, pp. 65760-65771, Oct. 2018.
	
	\bibitem{Ai2019} Y. Ai, A. Mathur, M. Cheffena, M. Bhatnagar and H. Lei,
	"Physical layer security of hybrid satellite-FSO cooperative Systems," 
	\textit{IEEE Photonics Journal}, vol. 11, no. 1, pp. 1-14, Jan. 2019.
	
	\bibitem{Illi2020} {E. Illi, F. El Bouanani, F. Ayoub, and M.-S. Alouini, "A
		PHY Layer Security Analysis of a Hybrid High Throughput Satellite With an
		Optical Feeder Link," \textit{IEEE Open Journal of the Communications Society%
		}, vol. 1, pp. 713-731, May 2020.} 
	
	\bibitem{zedini} E. Zedini, I. S. Ansari, and M.-S. Alouini, "Performance
	analysis of mixed Nakagami-$m$ and Gamma-Gamma dual-hop FSO transmission
	systems," \textit{IEEE Photon. J.}, vol. 7, no. 1, pp. 1-20, Dec. 2014.
	
	\bibitem{Table} I. S. Gradshteyn and I. M. Ryzhik, \textit{Table of
		Integrals, Series, and Products: Seventh Edition}, Burlington, MA: Elsevier,
	Feb. 2007.
	
	\bibitem{Wolfram} Wolfram Research, Inc., Mathematica, Version 12.1,
	Champaign, IL (2020), Available online: http://functions.wolfram.com/.
	
	\bibitem{kilbas} A. Kilbas, \textit{H-Transforms: Theory and Applications},
	Analytical Methods and Special Functions, Taylor and Francis, Mar. 2004.
	
	\bibitem{Koushal} H. Kaushal, G. Kaddoum, V. K. Jain, and S. Kar,
	"Experimental investigation of optimum beam size for FSO uplink," \textit{%
		Optics Commun.}, vol. 400, pp. 106-114, Oct. 2017.
	
	\bibitem{Sandalidis} H. G. Sandalidis, "Performance of a laser
	Earth-to-satellite link over turbulence and beam wander using the modulated
	Gamma-Gamma irradiance distribution," \textit{Applied Optics}, vol. 50, no.
	6, pp. 952, Feb. 2011.
\end{thebibliography}
\end{document}